\newcommand*{\ditto}{--- \raisebox{-0.5ex}{''} ---}
\newenvironment{options}
	{\medskip\noindent\begin{longtable}{p{.25\columnwidth}p{.694\columnwidth}}
	\textsf{Option} & \textsf{Description} \\
	\midrule
	}
	{\bottomrule\end{longtable}}
\newcommand{\option}[2]{
	\small\texttt{#1} & {\small#2} \\
}
\newcommand{\compatibilityoption}[2]{
	\small\texttt{#1 (C)} & {\small #2} \\
}
\newcommand{\defaultoption}[2]{
\small\texttt{#1 (default)} & {\small #2} \\
}
\newenvironment{commands}
{\medskip\noindent\begin{longtable}{p{.20\columnwidth}p{.744\columnwidth}}
		\textsf{Command} & \textsf{Description} \\
		\midrule
	}
	{\bottomrule\end{longtable}}
\newcommand{\command}[2]{
	\small\texttt{#1} & {\small#2} \\
}
\newtheorem{theorem}{Theorem}[section]
\newtheorem{lemma}[theorem]{Lemma}
\newtheorem{proposition}[theorem]{Proposition}
\newtheorem{corollary}[theorem]{Corollary}
\newtheorem{defn}[theorem]{Definition}
\newcommand{\R}{\mathbb{R}}
\newcommand{\Z}{\mathbb{Z}}
\newcommand{\E}{\mathbb{E}}
\def\bra#1{\mathinner{\langle{#1}|}}
\newcommand{\ket}[1]{\mathinner{|{#1}\rangle}}
\newcommand{\braket}[2]{\langle #1|#2\rangle}
\renewcommand{\part}[2]{\frac{\partial #1}{\partial #2}}
\newcommand{\all}[2]{\begin{align}\label{#2} #1\end{align}}
\newcommand{\al}[1]{\begin{align} #1\end{align}}
\newcommand{\eq}[1]{\begin{equation}#1\end{equation}}
\newcommand{\en}[1]{\left ( #1 \right )}
\newcommand{\nl}{\notag \\}
\newcommand{\norm}[1]{\lVert#1\rVert}
\newcommand{\N}{\mathbb{N}}
\long\def\rem#1{}
\newcommand{\thmref}[1]{\hyperref[#1]{{Theorem~\ref*{#1}}}}
\newcommand{\lemref}[1]{\hyperref[#1]{{Lemma~\ref*{#1}}}}
\newcommand{\remref}[1]{\hyperref[#1]{{Remark~\ref*{#1}}}}
\newcommand{\corref}[1]{\hyperref[#1]{{Corollary~\ref*{#1}}}}
\newcommand{\eqnref}[1]{\hyperref[#1]{{Equation~(\ref*{#1})}}}
\newcommand{\claimref}[1]{\hyperref[#1]{{Claim~\ref*{#1}}}}
\newcommand{\remarkref}[1]{\hyperref[#1]{{Remark~\ref*{#1}}}}
\newcommand{\propref}[1]{\hyperref[#1]{{Proposition~\ref*{#1}}}}
\newcommand{\factref}[1]{\hyperref[#1]{{Fact~\ref*{#1}}}}
\newcommand{\defref}[1]{\hyperref[#1]{{Definition~\ref*{#1}}}}
\newcommand{\exampleref}[1]{\hyperref[#1]{{Example~\ref*{#1}}}}
\newcommand{\hypref}[1]{\hyperref[#1]{{Hypothesis~\ref*{#1}}}}
\newcommand{\secref}[1]{\hyperref[#1]{{Section~\ref*{#1}}}}
\newcommand{\chapref}[1]{\hyperref[#1]{{Chapter~\ref*{#1}}}}
\newcommand{\apref}[1]{\hyperref[#1]{{Appendix~\ref*{#1}}}}
\begin{document}
\title{Low depth algorithms for quantum amplitude estimation}

\author[c,b]{Tudor Giurgica-Tiron}
\thanks{tgt@stanford.edu} 
\author[a,e]{Iordanis Kerenidis}
\thanks{ iordanis.kerenidis@qcware.com} 
\author[d,b]{Farrokh Labib}
\thanks{ farrokhlabib@gmail.com} 
\author[a]{Anupam Prakash}
\thanks{ anupam.prakash@qcware.com} 
\author[b]{William Zeng} 
\thanks{ william.zeng@gs.com}

\affil[a]{QC Ware Corp., Palo Alto, USA and Paris, France.}
\affil[b]{Goldman, Sachs \& Co., New York, USA.}
\affil[c]{Stanford University, Palo Alto, USA.}
\affil[d]{CWI Amsterdam, Netherlands.}
\affil[e]{CNRS, Universit\'{e} Paris, France.}

\maketitle

	\begin{abstract} 
We design and analyze two new low depth algorithms for amplitude estimation (AE) achieving an optimal tradeoff between the quantum speedup and circuit depth. 
For $\beta \in (0,1]$, our algorithms require $N= \tilde{O}( \frac{1}{ \epsilon^{1+\beta}})$ oracle calls and require the oracle to be called sequentially $D= O( \frac{1}{ \epsilon^{1-\beta}})$ times 
to perform amplitude estimation within additive error $\epsilon$. These algorithms interpolate between the classical algorithm $(\beta=1)$ and the 
standard quantum algorithm ($\beta=0$) and achieve a tradeoff $ND= O(1/\epsilon^{2})$. These algorithms bring quantum speedups for Monte Carlo methods 
closer to realization, as they can provide speedups with shallower circuits. 

The first algorithm (Power law AE) uses power law schedules in the framework introduced by Suzuki et al \cite{S20}. The algorithm works for 
$\beta \in (0,1]$ and has provable correctness guarantees when the log-likelihood function satisfies regularity conditions required for the Bernstein Von-Mises theorem. The second algorithm (QoPrime AE) uses the Chinese remainder theorem for combining lower depth estimates to 
achieve higher accuracy. The algorithm works for discrete $\beta =q/k$ where $k \geq 2$ is the number of distinct coprime 
moduli used by the algorithm and $1 \leq q \leq k-1$, and has a fully rigorous correctness proof. We analyze both algorithms in the presence of depolarizing noise and provide numerical comparisons with the state of the art amplitude estimation algorithms. 
\end{abstract} 

\newpage

\section{Introduction}
Amplitude estimation \cite{BHMT02} is a fundamental quantum algorithm that allows a quantum 
computer to estimate the amplitude $\bra{0} U\ket{0}$ for a quantum circuit 
$U$ to additive error $\epsilon$ with $O(1/\epsilon)$ calls to $U$. The algorithm offers a 
quadratic advantage over classical sampling and has many applications including 
speedups for Monte Carlo methods \cite{M15} and approximate counting \cite{BHT98}.

To be more precise, we consider an amplitude estimation setting where the algorithm is given access to a quantum circuit $U$ such that 
$U\ket{0^t} = \cos (\theta) \ket{x, 0} + \sin(\theta) \ket{x', 1}$ where $\ket{x}, \ket{x'}$ are arbitrary states on $(t-1)$ qubits. The algorithm's goal
is to estimate the amplitude $\theta$ within an additive $\epsilon$. The closely related approximate counting problem corresponds to the special case 
where $U\ket{0^t}= \frac{1}{2^{(t-1)/2}}( \sum_{i \in S} \ket{i}\ket{0} + \sum_{i \not\in S} \ket{i}\ket{1} )$ is a uniform superposition over bit strings of length $(t-1)$ and the binary label on the second register is the indicator function for $S \subseteq \{0, 1\}^{t-1}$. Amplitude estimation in this setting provides an estimate for $|S|$. Approximate counting in turn generalizes Grover's search \cite{G98} and the problem of finding the number of marked elements in a list. 

We briefly discuss two applications of amplitude estimation, to quantum Monte Carlo methods and inner product estimation that are particularly relevant for applications of quantum computing to finance and machine learning. 

Quantum Monte Carlo methods are an important application of amplitude estimation to the problem of estimating the mean of a real valued random variable by sampling. Let $f(x,S)$ be a real valued function where $x$ is the input and $S$ is the random seed and let $\sigma$ be the variance of $f(x,S)$. Classically estimating the mean $E_{S} f(x,S)$ to additive error $\epsilon$ is known to require $N= O(\sigma^{2}/\epsilon^{2})$ samples. Montanaro showed that there is a quantum algorithm for estimating the mean that required $N=\widetilde{O}\left(\frac{\sigma}{\epsilon}\right)$ samples, thus quadratically improving the dependence on $\sigma$ and $1/\epsilon$ \cite{M15}. This quantum Monte-Carlo algorithm builds on prior works for estimating the mean of real valued functions using quantum computers \cite{A99,G98,B11} and has been further generalized to settings when more information about the distribution such as upper and lower bounds on the mean are known \cite{L18,H18}. 

Amplitude estimation also has applications which are not reducible to approximate counting, where the goal is to estimate $\bra{0} U\ket{0}$
for a unitary $U$ without additional structure. Some examples of this kind include applications of amplitude estimation to quantum linear algebra and machine learning. For example, quantum procedures for estimating the inner product $\braket{x}{y}$ between vectors $x, y \in \R^{n}$ have been found to be useful for quantum classification and clustering algorithms \cite{KLLP18, WKS14}. Amplitude estimation is the source of the quantum speedup for inner product estimation, with the quantum algorithms requiring $O(1/\epsilon)$ samples as opposed to the $O(1/\epsilon^{2})$ samples required classically for estimating the inner product to error $\epsilon$. Amplitude estimation variants are also important for reducing the condition number dependence in the quantum linear system solvers from $O(\kappa^{2})$ to $O(\kappa)$ \cite{HHL09,A12}. 

In recent times, there has been a lot of interest in reducing the resource requirements for amplitude estimation algorithms, moving towards the goal of finding an AE algorithm compatible with noisy intermediate scale quantum (NISQ) devices \cite{P18}. The major limitation for adapting amplitude estimation to nearer term devices is that the circuit $U$ needs to be run sequentially $O(1/\epsilon)$ times resulting in a high circuit depth for the algorithm. 

The classical amplitude estimation algorithm \cite{BHMT02} invokes the controlled quantum circuit $U$ at least $O(1/\epsilon)$ times in 
series followed by a quantum Fourier transform for estimating amplitudes to error $\epsilon$ using the phase estimation algorithm \cite{K95}. This makes applications of amplitude estimation like Monte Carlo methods or approximate counting prohibitive for near term hardware as, even in cases where the oracle itself does not have significant depth, the high number of repetitions in series of the oracle makes the overall depth of the circuits prohibitive for the high noise rates of current NISQ devices. A significant amount of recent work has tried to make amplitude estimation nearer term. The known results on amplitude estimation along with their resource requirements in terms of qubits, depth, and total number of oracle calls (number of times the circuit $U$ is run) are summarized in Table \ref{table1}.

\begin{table} [H] 
\begin{center} 
  \begin{tabular}{|l|c|c|c|} 
   \hline
  &&&\\
   \textbf{Algorithm} & \textbf{Qubits} & \textbf{Depth} & \textbf{Number of calls}  \\
      &&&\\
    \hline
      &&&\\
      Amplitude estimation \cite{BHMT02} & $n + \log(1/\epsilon)$ & $d \cdot \frac{1}{\epsilon}+\log\log(1/\epsilon)$ & $\frac{1}{\epsilon}$ \\
    &&&\\
    \hline
     &&&\\
    QFT free amplitude estimation \cite{S20, AR20}  & $n$ & $d \cdot \frac{1}{\epsilon}$ & $\frac{1}{\epsilon}$ \\
      &&&\\
     \hline 
     &&&\\
     IQAE \cite{ grinko2019iterative} & $n$ & $d \cdot \frac{1}{\epsilon}  $ & $\frac{1}{\epsilon} \log (1/\epsilon)$  \\
    &&&\\
    \hline 
     &&&\\
    Power-law AE [This paper]  & $n$ &  $d \cdot \left(\frac{1}{\epsilon}\right)^{1-\beta}  $ & $ \left( \frac{1}{\epsilon} \right)^{1+\beta}$  \\
    &&&\\
     \hline 
     &&&\\
      QoPrime AE [This paper]   & $n$ &  $d \cdot \left(\frac{1}{\epsilon}\right)^{1-q/k}$ & $ \left( \frac{1}{\epsilon} \right)^{1+q/k}$  \\
    &&&\\
    \hline  
   \end{tabular} 
   \caption {Asymptotic tradeoffs of amplitude estimation algorithms. Parameters:  $n$ is the number of qubits and $d$ is the circuit depth for a single application of $U$, $\epsilon$ is the additive error, $\beta \in (0,1]$, $k\geq 2$, $q \in [k-1]$. 
   } \label{table1} 
  \end{center} 
   \end{table}

A number of recent works \cite{S20,AR20,grinko2019iterative} have given QFT free amplitude estimation algorithms, that is algorithms that do not require a Quantum Fourier transform (QFT) at the end of the computation. The QFT circuit is applied to a register with $O(\log(1/\epsilon))$ qubits and adds an asymptotic factor of $O( \log \log (1/\epsilon))$ to the overall circuit depth of the algorithm. Eliminating the QFT does not significantly lower the depth of the AE algorithm, but it is an important step towards making amplitude estimation nearer term as it removes the need to apply controlled versions of the oracle. Controlled oracles incur considerable overhead as they require adding multiple controls to every gate in the oracle circuit. The iterative quantum amplitude estimation (IQAE) algorithm in Table \ref{table1} has the same asymptotic performance as \cite{S20, AR20}, however it can be viewed as the state of the art AE algorithm in practice, as it has a provable analysis with the lowest constant overheads among the known QFT free amplitude estimation methods. 

In this work, we take a different view and we ask the following question: can quantum amplitude estimation algorithms that use quantum circuits with depth asymptotically less than $O(1/\epsilon)$ provide any speed up with respect to classical algorithms? We respond to this question by focusing on algorithms that interpolate between classical and quantum amplitude estimation. At a high level, classical amplitude estimation requires $O(1/\epsilon^2)$ applications of the oracle, but these applications can be performed in parallel. The standard amplitude estimation algorithm \cite{BHMT02} on the other hand requires $O(1/\epsilon)$ serial applications of the oracle, but one only needs to perform this computation once. In this paper, we present two different algorithms that interpolate between the classical and quantum settings with an optimal tradeoff  $ND=O(1/\epsilon^{2})$, where $N$ is the total number of oracle calls and $D$ is the maximum number of sequential oracle calls. 

That said, it is known that there are limitations to depth reduction for amplitude estimation and that the circuit depth for AE cannot be reduced generically while maintaining the entire speedup. Zalka first established a tradeoff between the depth and the number of executions for Grover's search \cite{Z99}. Recent work of Burchard \cite{B19}, building upon \cite{J17} extends the tradeoffs in Zalka's work to the setting of approximate counting and shows that for approximate counting, depth-$D$ parallel runs of the algorithm can at most achieve a $D$-fold speedup over the classical algorithm. Burchard's work \cite{B19} also suggests an algorithm matching his lower bound in settings where the approximate counting domain can be partitioned into equally sized pseudorandom subdomains. However, this method is restricted to approximate counting as it assumes a discrete domain that can be partitioned into pseudorandom parts and it incurs a number of overheads that are significant for near term devices, we refer to \cite{bouland2020prospects} for a more detailed discussion. The amplitude estimation algorithms that we propose in this paper match the Burchard-Zalka lower bounds exactly, and are applicable to the more general setting of amplitude estimation where no additional structure required. 

Tradeoffs between depth and the number of oracle calls for quantum algorithms have also been considered in some recent works.  Variational quantum eigensolver (VQE) algorithms  
interpolating between classical sampling and phase estimation with circuit depth $O(1/\epsilon^{\alpha})$ and $O(1/\epsilon^{2(1-\alpha)})$ oracle calls were proposed in \cite{wang2019accelerated}.  
Fisher information calculations similar to the ones used here have been to analyze engineered quantum likelihood functions \cite{koh2020framework} and to to design experiments to 
best estimate the period for time invariant single qubit Hamiltonian systems \cite{ferrie2013best}.

\subsection*{The Power law Amplitude Estimation algorithm} 

Our first algorithm utilizes the framework proposed by Suzuki et al. \cite{S20} for QFT free amplitude estimation. 
A higher-level description of this algorithm, under the name of the Kerenidis-Prakash approach, appears in the recent survey paper \cite{bouland2020prospects}. In the Suzuki et al. framework the oracle is invoked with varying depths, and then measurements in the standard basis are performed, followed by classical maximum likelihood post-processing to estimate the amplitude. Algorithms in this framework are specified as schedules $(m_{k}, N_{k})$ where the oracle is applied $m_{k}$ times in series for $N_{k}$ iterations and, at the end, the results are post-processed classically using maximum likelihood estimation. 

The main idea behind these algorithms is the following. The classical amplitude estimation procedure uses $O(1/\epsilon^{2})$ calls to the circuit $U$ and measurements in the standard basis, which is equivalent to sampling from a Bernoulli random variable with success probability $\alpha = \cos^{2}(\theta)$. The quantum amplitude estimation algorithms on the other hand, use quantum circuits that sequentially perform oracle calls at all depths up to $O(1/\epsilon)$. If the quantum circuits have depth $k$ then a quantum algorithm samples from a Bernoulli random variable with success probability $\cos^{2}((2k+1)\theta)$. Suzuki et al \cite{S20} observed that samples from a Bernoulli random variable are more informative for estimating $\theta$ if the success probability is $\cos^{2}((2k+1)\theta)$, and this can be made precise using the notion of Fisher information $I_f(\alpha)$ for a schedule. 

The QFT free amplitude estimation algorithm \cite{S20} uses an exponential schedule with depths $m_{k}= 2^{k}$ all the way up to the maximum depth of $O(1/\epsilon)$ and chooses $N_{k} = N_{shot}$ to be a constant. The quantum Fourier transform step at the end of the algorithm is eliminated, however the asymptotic efficiency of max-likelihood post-processing is not established rigorously \cite{AR20} . Linear schedules with $m_{k} =k$ were also considered in \cite{S20}. Subsequently, Aaronson and Rall \cite{AR20} provided a provable QFT free amplitude estimation algorithm that does not require max-likelihood estimation. The state of the art QFT free algorithm is the IQAE \cite{grinko2019iterative}, which improves upon the large constant factors required for the analysis in \cite{AR20} and has the best performance among all known QFT free variants of amplitude estimation. 

Our power law amplitude estimation algorithm (Power law AE) uses power law schedules with constant $N_{k} = N_{shot}$  and $m_{k} = \lfloor k^{\frac{1-\beta}{ 2\beta} } \rfloor$, where $k$ starts from $1$ and 
increases until the maximum depth for the quantum circuit is $O(1/\epsilon^{1-\beta})$ for $\beta \in (0,1]$, at a cost of more parallel runs with total number of oracle calls scaling as $O(1/\epsilon^{1+\beta})$. When $\beta$ tends to 0, the schedule approaches the exponential schedule, while when $\beta$ is equal to 1, the algorithm is the classical one. The analysis of the power law AE is based on the observation that maximum likelihood estimation in this setting is equivalent to sub-dividing the domain for the amplitude $\theta$ into $O(1/\epsilon)$ equal parts and performing Bayesian updates starting from a uniform prior. If the prior and the log-likelihood function are sufficiently regular, this allows us to use the Bernstein Von-Mises theorem \cite{HM73}, which can be viewed as a Bayesian central limit theorem that quantifies the rate of convergence to the Bayesian estimator to the normal distribution centered at the true value of $\theta$ with variance $1/N_{shot} I_f(\alpha)$, where $\alpha  = \cos^2(\theta)$. The variant of the Bernstein Von-Mises theorem proved in \cite{HM73} is particularly helpful for the analysis as it bounds the rate of convergence of the posterior distribution to the normal distribution in the $\ell_{1}$ norm. The tradeoff $ND=O(1/\epsilon^{2})$ follows from Fisher information calculations for the power law schedules.

Very recently, super-linear polynomial schedules  in the presence of depolarizing noise have been considered by Tanaka et al. \cite{T20}. We also study the behaviour of our algorithm in the presence of depolarizing noise, and describe a way to make our algorithm robust to noise. In fact, we give a simple method for choosing the optimal parameter $\beta$ given a desired accuracy and noise level. One can see our algorithm as an optimal way of utilizing all the power of the available quantum circuit in terms of depth, meaning that instead of having to wait for quantum circuits to have good enough fidelity to apply sequentially a number of oracle calls of the order of $1/\epsilon$, which for Monte Carlo applications can grow between $10^3$ to $10^6$, our power-law AE algorithm makes it possible to use quantum circuits of any depth and provide a corresponding smaller speedup. 
  
The theoretical analysis described above relies on strong regularity conditions on the log-likelihood and the prior required for the Bernstein Von-Mises theorem, which can be hard to verify rigorously, even though they seem to hold empirically for the log-likelihood function for amplitude estimation.
It is therefore desirable, at least from a theoretical point of view,  to have an AE algorithm achieving the same $ND=O(1/\epsilon^{2})$ tradeoffs that does not rely on these conditions. If we attempt to find a schedule maximizing the Fisher information for a given number of oracle calls, the optimal solution is a schedule that makes oracle calls at the maximal possible depth. However, making oracle calls at a single depth are not sufficient for amplitude estimation due to the periodicity of the function $\cos^{2}((2k+1)\theta)$. This led us to consider AE algorithms that make queries at two (or more) depths and combine the results, leading to the QoPrime AE algorithm. 

\subsection*{The QoPrime Amplitude Estimation algorithm} 

Our second amplitude estimation algorithm (QoPrime) uses a number theoretic approach to amplitude estimation that enables a fully rigorous correctness proof and has the same depth vs number of oracle calls tradeoff as the power-law AE for a discrete set of exponents.

The basic idea for the QoPrime algorithm is to choose $k$ different co-prime moduli, each close to $O(1/\epsilon^{1/k})$ so their product is $N = O(1/\epsilon)$. Let the true value of the amplitude be $\pi M/2N$ where $M \in [0,N]$.  The algorithm estimates $\lfloor M \rfloor \mod N_{i}$, where $N_{i}$ is the product of $q$ out of the $k$ moduli using $N/N_{i}$ sequential calls to the oracle followed by measurements in the standard basis. These low accuracy estimates are then combined using the Chinese remainder theorem to obtain $\lfloor M \rfloor \mod N$. We now sketch the main idea for the QoPrime algorithm for the simplest case of $k=2$ moduli, this 
corresponds to an algorithm with $D=O(1/\epsilon^{1/2})$ and $N= O(1/\epsilon^{3/2})$. Let $a$ and $b$ be the two largest co-prime numbers upper bounded by the maximum depth $D=O(1/\sqrt{\epsilon})$. For simplicity, let us assume that the true value for $\theta$ is of the form $\frac{\pi M}{ 2 (2a+1)(2b+1)}$ for some integer $M \in [ (2a+1)(2b+1)]$. The QoPrime algorithm recovers $M \mod (2a+1)$ and $M \mod (2b+1)$ and then determines $M$ using the Chinese remainder theorem. 

Invoking the oracle $a$ times in series followed by a measurement in the standard basis is equivalent to sampling from a Bernoulli random variable with the success probability $\cos^{2} ((2a+1)\theta)$. As a function of $\theta$, this probability is periodic with period $\frac{\pi}{2a+1}$ and is therefore a function of $\pm M \mod (2b+1)$. Subdividing the interval $[0, \pi/2]$ into $(2b+1)$ equal parts, it follows from the additive Chernoff bounds that $\pm M \mod (2b+1)$ can be recovered with $O((2b+1)^{2})$ samples. The algorithm is able to estimate $M \mod (2b+1)$ with $O(2b+1)^{2}$ repetitions of a quantum circuit of depth $(2a+1)$ and, similarly, $M \mod (2a+1)$ with $O(2a+1)^{2}$ repetitions of a quantum circuit of depth $(2b+1)$. The Chinese remainder theorem is then used to combine these low precision estimates to obtain the integer  $M \in [ (2a+1)(2b+1)]$, thus boosting the precision for the estimation procedure. The total number of oracle calls made was $O( ab^{2} + ba^{2} ) = O( 1/\epsilon^{1.5})$. The maximum depth for the oracle call was $O(1/\epsilon^{1/2})$. A recursive application of the algorithm at a lower precision is needed to determine the sign of the estimates, the details of which are described in section \ref{QoPrime}. The algorithm extends to the more general case where the true value for $\theta$ is $\frac{\pi M}{ 2 N}$ where $N$ is the product of  $q \leq k$ coprime moduli and $M \in [0, N]$, in which case we also show how to pick these values $q,k$. Here, the maximum depth of the quantum circuit is $D=O(1/\epsilon^{1-q/k})$, and the total number of oracle calls are  $ N = O(1/\epsilon^{1+q/k})$.

Further, we study the accuracy of the algorithm in the presence of depolarizing noise and provide a number of graphs that show the behavior of the algorithm under noise. The analysis of the algorithm in a noisy setting shows that noise limits the depth of the oracle calls that can be made, but it also 
allows us to choose optimally the algorithm parameters to minimize the number of oracle calls for a given target approximation error and noise rate. The experiments show that the constant overhead for the QoPrime algorithm is reasonable ($C <10$) for most settings of interest.
\vspace{6pt}

Last, we benchmark our two new low depth AE algorithms with the state of the art IQAE algorithm \cite{grinko2019iterative} in noisy settings. Algorithms such as IQAE require access to a full circuit depth of $O(1/\epsilon)$, and this large depth is exponentially penalized by the depolarizing noise by requiring an exponentially large number of samples to achieve a precision below the noise level. In comparison, the power law and the QoPrime AE algorithms transition smoothly to a classical estimation scaling and do not suffer from an exponential growth in oracle calls. The Power law AE algorithm shows the best practical performance according to the simulations for different error rates and noise levels.

Overall, we present here two low depth algorithms for quantum amplitude estimation, thus potentially bringing a number of applications closer to the NISQ era. Of course, it is important to remember that even applying the oracle $U$ once may already necessitate better quality quantum computers than the ones we have today so observing these quantum speedups in practice is still a long way ahead. Nevertheless, we believe the optimal tradeoff between the total number of oracle calls and the depth of the quantum circuit that is offered by our algorithms can be a powerful tool towards finding quantum applications for near and intermediate term devices. 

The paper is organised as follows: In Section \ref{PLAE}, we describe and analyze the power law amplitude estimation algorithm, while in Section \ref{QoPrime}, we describe and analyze the QoPrime amplitude estimation algorithm. In Section \ref{empirics}, we present empirical evidence of the performance of our algorithm and benchmarks with other state-of-the-art algorithms for amplitude estimation.

\section{Amplitude estimation with power law schedules} \label{PLAE}

\subsection{Preliminaries} 

In this section, we introduce some preliminaries for the analysis of amplitude estimation with power law schedules. Let $X$ be a random variable with density function $f(X, \alpha)$ that is determined by the single unknown parameter $\alpha$. Let $l(X, \alpha) = \log f(X, \alpha)$ be the log-density function and let $l'(x, \alpha) = \part{l(x, \alpha)}{ \alpha}$. In this section, all expectations are with respect to the density function $f(X, \alpha)$ and $'$ denotes the partial derivative with respect to $\alpha$. 

The Fisher information $I_{f} (\alpha)$ is defined as the variance of the log-likelihood function, that is $I_{f} (\alpha)= Var[  l'(X, \alpha)]$. It can also be equivalently defined as $I_{f} (\alpha)=- E_{f} [ l''(X, \alpha) ]$.  
More generally, for parameters $\alpha \in \R^{n}$, the Fisher information is defined as the covariance matrix of the second partial derivatives of $l(X, \alpha)$ with respect to $\alpha$. 

Let $\alpha^{*}$ be the true value for $\alpha$ and consider a Bayesian setting where a prior distribution on $\alpha$ is updated given i.i.d. samples $X_{i}, i \in [n]$ from a distribution $f(X, \alpha^{*})$. 
The Bernstein-Von Mises theorem stated below quantifies the rate of convergence of the Bayesian estimator to the normal distribution with mean and variance $(\alpha^{*}, \frac{1}{nI_{f}(\alpha^{*})})$
in the $\ell_{1}$ norm for cases where the log-likelihood and the prior are sufficiently regular. The complete list of regularity conditions for the theorem is given in Appendix \ref{BvM}. 

\begin{theorem} \label{thm:bvm} 
[Bernstein Von-Mises Theorem \cite{HM73} ]  Let $X_{i}, i \in [n]$ be independent samples from a distribution $f(X, \alpha^{*})$ and let $R_{0}$ be the prior distribution on $\alpha$. 
Let $R_{n}$ be the posterior distribution after $n$ samples and let $Q_{n, \alpha^{*}}$ be the Gaussian with mean and variance $(\alpha^{*}, \frac{1}{nI_{f}(\alpha^{*})})$.

If  $f(X, \alpha^{*})$ and $R_{0}$ satisfy the regularity conditions enumerated in the Appendix \ref{BvM}, then there exists a constant $c>0$ such that, 
\eq{ 
\Pr_{X_{i} \sim f, i \in [n]} \left[ \norm{ R_{n} - Q_{n, \alpha^{*}} }_{1} \geq c \sqrt{\frac{ 1 }{ n}}  \right ]  = o\en{\frac{1}{\sqrt{n}}} 
} 
\end{theorem}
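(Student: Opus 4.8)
The plan is to prove the statement along the classical \emph{local asymptotic normality} (LAN) route: reduce the posterior to a Gaussian via a second-order expansion of the log-likelihood, and then upgrade pointwise density convergence to the $\ell_1$ bound. Writing $\ell_n(\alpha) = \sum_{i \in [n]} l(X_i, \alpha)$, Bayes' rule gives the posterior density $R_n(\alpha) \propto R_0(\alpha)\, e^{\ell_n(\alpha)}$. First I would establish \emph{posterior consistency}: under the regularity conditions of Appendix \ref{BvM} (identifiability, smoothness, and integrability of the score), the posterior concentrates on a shrinking neighborhood of $\alpha^*$ of radius $\rho_n \to 0$, with the mass outside contributing only a negligible amount. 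The standard device is a Le Cam / Schwartz testing argument: construct tests separating $\alpha^*$ from $\{|\alpha - \alpha^*| > \rho_n\}$ whose error probabilities decay exponentially in $n$, so that the tail of the likelihood ratio is controlled. This step localizes the analysis to an $O(1/\sqrt n)$ window and is also where the stated $o(1/\sqrt n)$ failure probability will ultimately be sourced.

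Inside the local window I would perform the LAN expansion. Taylor expanding $\ell_n$ about $\alpha^*$,
\[
\ell_n(\alpha) - \ell_n(\alpha^*) = (\alpha - \alpha^*)\, S_n - \tfrac{1}{2}(\alpha - \alpha^*)^2\, n\, I_f(\alpha^*)\,(1 + o(1)),
\]
where $S_n = \ell_n'(\alpha^*) = \sum_{i \in [n]} l'(X_i, \alpha^*)$ is the score. The law of large numbers gives $-\tfrac{1}{n}\ell_n''(\alpha^*) \to I_f(\alpha^*)$, and completing the square shows the local posterior is proportional to a Gaussian of variance $\tfrac{1}{n I_f(\alpha^*)}$ centered at the efficient estimator $\hat\alpha_n = \alpha^* + S_n/(n I_f(\alpha^*))$; reconciling this random center with the deterministic center $\alpha^*$ of $Q_{n,\alpha^*}$ is a point I return to below. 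The prior $R_0$, being continuous and positive near $\alpha^*$, is essentially constant over the $1/\sqrt n$ scale and contributes only a lower-order multiplicative correction, while the cubic remainder of the expansion is $O((\alpha - \alpha^*)^3 n)$, which is $o(1)$ on the local window. To convert this into the $\ell_1$ statement I would invoke Scheff\'e's lemma: pointwise convergence of the normalized posterior density to the Gaussian density, together with convergence of the normalizing constants, yields convergence in $\ell_1$/total variation. The quantitative rate $c/\sqrt n$ and the failure probability $o(1/\sqrt n)$ then come from making each approximation quantitative — a Berry--Esseen or moderate-deviation bound for $S_n$ and for the fluctuation of $\tfrac{1}{n}\ell_n''$, and Chernoff/Bernstein concentration (available because the amplitude-estimation likelihoods are bounded Bernoulli families) for the tail events where the LLN estimate or the consistency bound fails.

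The step I expect to be the main obstacle is twofold. First, obtaining the \emph{sharp} rate $c/\sqrt n$ together with the failure probability $o(1/\sqrt n)$ requires more than the plain CLT/LLN: one needs Berry--Esseen or Edgeworth-type control of the score and of the LAN remainder, plus careful bookkeeping of how the prior's local variation and the cubic Taylor term enter the total-variation distance. Second, and more delicate, is the centering: the posterior naturally concentrates around $\hat\alpha_n$, which deviates from $\alpha^*$ by $S_n/(n I_f(\alpha^*)) = O_P(1/\sqrt n)$, an amount comparable to the Gaussian's own standard deviation. Controlling this gap against a Gaussian centered exactly at $\alpha^*$ is where the regularity conditions of Appendix \ref{BvM} must do their real work, and it is the part of the argument I would scrutinize most carefully.
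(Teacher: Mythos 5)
The paper does not actually prove this statement: Theorem~\ref{thm:bvm} is imported as a black box from Hipp and Michel \cite{HM73}, so there is no in-paper proof to compare your sketch against. Judged on its own terms, your outline follows the standard architecture for Bernstein--Von Mises results (posterior consistency via a testing argument, localization to an $O(1/\sqrt n)$ window, LAN expansion of the log-likelihood, Scheff\'e/total-variation conversion, and Berry--Esseen-type refinements to get the quantitative rate), and the ingredients you list are the right ones.

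However, the point you defer to the end --- the centering --- is not a technicality that the regularity conditions can absorb; it is where the statement, as written, breaks. Your own expansion shows the local posterior is a Gaussian of variance $\frac{1}{nI_f(\alpha^*)}$ centered at $\hat\alpha_n = \alpha^* + S_n/(nI_f(\alpha^*))$, and $\sqrt{n}\,(\hat\alpha_n-\alpha^*)$ is asymptotically $N\bigl(0,1/I_f(\alpha^*)\bigr)$, i.e.\ the random center deviates from $\alpha^*$ by a quantity of the same order as the Gaussian's standard deviation. Since
\[
\bigl\lVert N(\hat\alpha_n,\sigma_n^2) - N(\alpha^*,\sigma_n^2) \bigr\rVert_1 \;=\; 2\Bigl(2\Phi\bigl(\lvert\hat\alpha_n-\alpha^*\rvert/2\sigma_n\bigr)-1\Bigr) \;\asymp\; \sqrt{n I_f(\alpha^*)}\,\lvert\hat\alpha_n-\alpha^*\rvert ,
\]
the total-variation distance between the posterior and a Gaussian centered at the \emph{deterministic} $\alpha^*$ converges in distribution to a nondegenerate $\Theta(1)$ random variable; it is not $O(1/\sqrt n)$ with probability $1-o(1/\sqrt n)$. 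No strengthening of the Appendix~\ref{BvM} conditions changes this. The theorem of \cite{HM73} is stated with the approximating Gaussian centered at the maximum-likelihood estimator (equivalently at $\alpha^*+S_n/(nI_f)$), and that is the version your argument actually proves. So your proof strategy is sound, but it terminates at the correct (MLE-centered) statement rather than the one printed here; to recover the paper's downstream use in Theorem~\ref{thm:Bvm_proof_of_MLE} one must separately add that $\lvert\hat\alpha_n-\alpha^*\rvert = O(1/\sqrt{nI_f(\alpha^*)})$ with the required probability, which follows from standard concentration for these bounded Bernoulli families. I would make that repair explicit rather than attempt to force the $\alpha^*$-centered bound.
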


As we have defined before, the amplitude estimation algorithm has access to a quantum circuit $U$ such that 
$U \ket{0^{t} } = \cos(\theta) \ket{x, 0} + \sin(\theta) \ket{x', 0^{\perp}}$ where $\ket{x}, \ket{x'}$ are arbitrary states on $(t-1)$ qubits. 
If the second register is measured in the standard basis, then the distribution of the measurement outcome $f(X,\alpha)$ is a Bernoulli random variable with success probability $\alpha = \cos^2(\theta) \in [0,1]$. If a quantum circuit of $k$ sequential calls to the circuit $U$ is applied then the quantum 
state $\cos ((2k+1) \theta) \ket{x, 0} + \sin((2k+1)\theta ) \ket{x', 0^{\perp}}$ can be obtained. Measuring this state in the standard basis, the distribution of measurement outcome is again a Bernoulli random variable with success 
probability $\cos^{2} ((2k+1)\theta)$. 

A quantum AE algorithm therefore has access to samples from Bernoulli random variables with success probability $\cos^{2} ((2k+1)\theta)$ where $k$ is the number of sequential oracle calls in the quantum circuit, which corresponds to its depth. The higher depth samples are more informative for estimating 
$\theta$. The next proposition quantifies the advantage for higher depth samples, showing that the Fisher information grows quadratically with the depth of the oracle calls. 

\begin{proposition} \label{propf} 
Let $f(X, \alpha) = \beta^{X} (1- \beta)^{1-X}$ for parameter $\alpha = \cos^{2} (\theta)$ and $\beta = \cos^{2} ((2m_{k}+1) \theta )$ for a positive integer $m_k$. The Fisher information is $I_{f}(\alpha) = 
\frac{ (2m_{k}+1)^{2} } { \alpha (1-\alpha) }$. 
\end{proposition}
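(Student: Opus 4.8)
The plan is to exploit the reparametrization behaviour of Fisher information rather than differentiate the composite log-likelihood directly. Writing $n := 2m_k + 1$, the parameter of interest is $\alpha = \cos^2(\theta)$ while the Bernoulli success probability is $\beta = \cos^2(n\theta)$, so $\beta$ is a deterministic function of $\alpha$ through the common variable $\theta$. Since the log-likelihood factors as $l(X, \alpha) = \tilde{l}(X, \beta(\alpha))$ where $\tilde{l}(X, \beta) = X\log\beta + (1-X)\log(1-\beta)$ is the ordinary Bernoulli log-likelihood, the chain rule gives $\part{l}{\alpha} = \part{\tilde l}{\beta}\,\diff{\beta}{\alpha}$. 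Because $\diff{\beta}{\alpha}$ does not depend on $X$, taking variances pulls it out as a square and yields the transformation rule $I_f(\alpha) = I_f(\beta)\left(\diff{\beta}{\alpha}\right)^2$. This reduces the problem to two independent computations: the Bernoulli Fisher information $I_f(\beta)$, and the Jacobian factor $\diff{\beta}{\alpha}$.

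First I would record the standard fact that for a Bernoulli variable $I_f(\beta) = \frac{1}{\beta(1-\beta)}$; this follows in one line from $\parttwo{\tilde l}{\beta}{\beta} = -\frac{X}{\beta^2} - \frac{1-X}{(1-\beta)^2}$ together with $\Ex[X] = \beta$ and the identity $I_f(\beta) = -\Ex[\,\partial_\beta^2 \tilde l\,]$. Next I would compute the Jacobian through $\theta$, using $\diff{\beta}{\alpha} = (\diff{\beta}{\theta})/(\diff{\alpha}{\theta})$. Differentiating the two definitions gives $\diff{\alpha}{\theta} = -\sin(2\theta)$ and $\diff{\beta}{\theta} = -n\sin(2n\theta)$, so $\diff{\beta}{\alpha} = \frac{n\sin(2n\theta)}{\sin(2\theta)}$.

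The final step is a trigonometric simplification that I expect to collapse cleanly. Substituting the two ingredients gives $I_f(\alpha) = \frac{n^2 \sin^2(2n\theta)}{\beta(1-\beta)\sin^2(2\theta)}$. The key observation is that the double-angle identity turns $\beta(1-\beta) = \cos^2(n\theta)\sin^2(n\theta) = \frac14\sin^2(2n\theta)$, so the $\sin^2(2n\theta)$ factors cancel entirely, leaving $I_f(\alpha) = \frac{4n^2}{\sin^2(2\theta)}$. Applying the same identity in the denominator, $\sin^2(2\theta) = 4\cos^2(\theta)\sin^2(\theta) = 4\alpha(1-\alpha)$, delivers $I_f(\alpha) = \frac{n^2}{\alpha(1-\alpha)} = \frac{(2m_k+1)^2}{\alpha(1-\alpha)}$, as claimed.

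This proposition carries no deep obstacle, so the step I would flag as needing care rather than difficulty is the domain of validity: $\diff{\beta}{\alpha}$ is undefined wherever $\sin(2\theta) = 0$, i.e. at $\theta \in \{0, \pi/2\}$ where $\alpha \in \{0, 1\}$. But these are precisely the points at which $\alpha(1-\alpha) = 0$ and the claimed $I_f(\alpha)$ is already singular, so the formula is meaningful and the argument valid throughout the open interval where the Fisher information is finite.
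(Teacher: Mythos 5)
Your proof is correct and reaches the same answer by a slightly different organization of the same underlying calculus. The paper differentiates the composite log-likelihood twice with respect to $\theta$ and converts to an $\alpha$-derivative via the prefactor $\frac{1}{2\alpha(1-\alpha)}$, using the identity $I_f(\alpha)=-\E_f[l'']$; you instead invoke the score/variance form of Fisher information together with its reparametrization law $I_f(\alpha)=I_f(\beta)\left(\diff{\beta}{\alpha}\right)^{2}$, reducing everything to the textbook Bernoulli information $\frac{1}{\beta(1-\beta)}$ times a Jacobian computed through $\theta$. Both routes pass through the chain rule in $\theta$ and the same double-angle identities $\sin^2(2\theta)=4\alpha(1-\alpha)$ and $\sin^2(2n\theta)=4\beta(1-\beta)$, so the content is identical; what your version buys is that it sidesteps the second-derivative change of variables, which strictly speaking carries an extra term $\frac{dl}{d\theta}\frac{d^2\theta}{d\alpha^2}$ that only vanishes in expectation because the score has mean zero --- a step the paper's displayed computation elides. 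One small point of care beyond the endpoints $\alpha\in\{0,1\}$ that you flag: your factored form is an indeterminate product $\infty\cdot 0$ at angles where $\sin(2n\theta)=0$ but $\sin(2\theta)\neq 0$ (i.e.\ $\beta\in\{0,1\}$ with $\alpha$ interior), and the cancellation of $\sin^2(2n\theta)$ should be understood as holding generically or by continuity there; the paper's proof has the same implicit cancellation, so this does not put you below its standard of rigor.
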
 
\begin{proof} 
As  $\alpha = \cos^{2} (\theta)$ we have $d\alpha = 2 \cos(\theta) \sin(\theta) d\theta$. The log-likelihood function is $l(X, \alpha) = X \log \beta + (1-X) \log (1-\beta)$. Thus, 
\al{ 
I_{f}(\alpha) &= -E_{f} \left [ \frac{d}{d\alpha^{2}} (2X \log \cos((2m_{k}+1) \theta)  + 2(1-X) \log \sin((2m_{k}+1) \theta) ) \right ] \nl 
&= \frac{-1}{ 2 \alpha (1-\alpha) } E_{f} \left [ \frac{d}{d\theta^{2}} (X \log \cos((2m_{k}+1) \theta)  + (1-X) \log \sin((2m_{k}+1) \theta) ) \right ] \nl 
&= \frac{(2m_{k}+1)^{2}}{ 2 \alpha (1-\alpha) } \left (  \frac{ E_{f}[X] } { \cos^{2} ((2m_{k}+1) \theta) }  + \frac{E_{f} [1-X]}{ \sin^{2} ((2m_{k}+1) \theta)}  \right ) \nl 
&=  \frac{(2m_{k}+1)^{2}}{  \alpha (1-\alpha) } \notag . 
} 

\end{proof}

\noindent The Fisher information is defined as a variance and is therefore additive over independent samples that do not need to be identically distributed. The Fisher information of an amplitude estimation schedule $(m_{k}, N_{k})$ 
\cite{S20} is the sum of the Fisher informations for the individual samples.

\subsection{The Power law Amplitude Estimation algorithm} 

The amplitude estimation algorithm using power law schedules is given as Algorithm \ref{AE:pls}. It is then analyzed to establish the tradeoff between the depth and the total number of oracle calls in a setting where the Bernstein Von Mises Theorem is applicable. 

\begin{algorithm} [H]
\caption{The Power law Amplitude Estimation} \label{AE:pls} 
\begin{algorithmic}[1]
\REQUIRE Parameter $\beta \in (0,1]$, $N_{shot} \in \Z$ and desired accuracy $\epsilon$ for estimating $\theta$. 
\REQUIRE Access to a unitary $U$ such that $U\ket{0} = \cos (\theta) \ket{x,0} + \sin(\theta) \ket{x',1}$.
\ENSURE An estimate of $\theta$ within accuracy $\epsilon$ with high probability
\STATE  Initialize the prior to be the uniform distribution on angles $\theta = \frac{\pi t \epsilon}{2}$.  
\FOR{k=1 \TO $K=\max \en{ \frac{1}{\epsilon^{2\beta}} , \log (1/\epsilon) } $ }
\STATE Initialize $N_{k_0}=0$ and  $N_{k_1}=0$. 
\FOR{i=1 \TO $N_{shots}$}
\STATE Apply $m_{k} = \lfloor k^{\frac{1-\beta}{2\beta} } \rfloor$ sequential oracle calls and measure last qubit of resulting quantum state in the standard basis.
\STATE If the outcome is 0, then $N_{k_0}=N_{k_0}+1$, else $N_{k_1}=N_{k_1}+1$. \ENDFOR
\STATE Perform Bayesian updates $p(\theta) \to p(\theta) \cos((2m_{k}+1) \theta)^{N_{k_0}}  \sin((2m_{k}+1) \theta)^{N_{k_1}} $ for $\theta=\pi t \epsilon/2$ for integer valued $t\in [0, 1/\epsilon]$ and interpolate to obtain the posterior probability distribution. \ENDFOR
\STATE Output $\theta$ with the highest probability according to the posterior probability distribution.

\end{algorithmic}
\end{algorithm}

\noindent The next theorem shows that Algorithm \ref{AE:pls} achieves approximation error $\epsilon$ with parameters $N= O( \frac{1}{ \epsilon^{1+\beta}})$ and $D= O( \frac{1}{ \epsilon^{1-\beta}})$ 
where $D$ is the maximum depth of the oracle calls and $N$ is the total number of oracle calls made. The choice of $K=\max (\frac{1}{\epsilon^{2\beta}}, \log (1/\epsilon))$ ensures that our power law AE algorithm makes a sufficient number of queries for small $\beta$ and approaches the exponential schedule of \cite{S20} as $\beta \to 0$. 

\begin{theorem} \label{thm:Bvm_proof_of_MLE}
The Power law Amplitude Estimation algorithm \ref{AE:pls} outputs an $\epsilon$ accurate estimate with $N= O( \frac{1}{ \epsilon^{1+\beta}})$ oracle calls and maximum depth $D= O( \frac{1}{ \epsilon^{1-\beta}})$ with probability at least $0.9$, that is the algorithm attains the tradeoff $ND= O(\frac{1}{\epsilon^{2}})$ in settings where the Bernstein-Von Mises theorem is applicable. 
\end{theorem}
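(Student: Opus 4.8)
The plan is to reduce the correctness of Algorithm \ref{AE:pls} to a single concentration statement for the posterior, which follows from \thmref{thm:bvm} once the total Fisher information of the power-law schedule is computed. Throughout I would parametrize the Bernoulli family directly by $\theta$ rather than by $\alpha=\cos^2\theta$: combining \propref{propf} with the reparametrization rule $I_\theta = \en{d\alpha/d\theta}^2 I_\alpha$ and $d\alpha/d\theta = -\sin(2\theta)$ gives the clean, non-degenerate per-sample value $I_\theta(m_k) = 4(2m_k+1)^2$, which avoids the $\alpha(1-\alpha)$ denominator that is singular at the endpoints $\theta\in\{0,\pi/2\}$. This is the natural parametrization since the algorithm's prior and Bayesian updates are carried out in $\theta$.

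First I would compute the total Fisher information. Since Fisher information is additive over independent samples, the schedule $(m_k,N_{shot})$ with $m_k=\lfloor k^{\frac{1-\beta}{2\beta}}\rfloor$ contributes $I_{\mathrm{tot}} = 4N_{shot}\sum_{k=1}^{K}(2m_k+1)^2 = \Theta\en{N_{shot}\sum_{k=1}^K k^{\frac{1-\beta}{\beta}}}$. Using the power-sum estimate $\sum_{k=1}^K k^{p}=\Theta(K^{p+1})$ with $p=\frac{1-\beta}{\beta}\ge 0$ gives $\sum_{k=1}^K k^{\frac{1-\beta}{\beta}} = \Theta(K^{1/\beta})$, so with $K=\epsilon^{-2\beta}$ one obtains $I_{\mathrm{tot}}=\Theta(N_{shot}\,\epsilon^{-2})$. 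The same estimate settles the resource counts: the maximum depth is $D=m_K=\Theta(\epsilon^{-(1-\beta)})$ and the total number of oracle calls is $N = N_{shot}\sum_{k=1}^K m_k = \Theta\en{N_{shot}\,K^{\frac{1+\beta}{2\beta}}} = \Theta(\epsilon^{-(1+\beta)})$, whence $ND=\Theta(\epsilon^{-2})$. The $K=\log(1/\epsilon)$ branch of the max only matters as $\beta\to 0$, where it reproduces the exponential schedule of \cite{S20}, and I would verify the same bounds there separately.

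Next I would invoke \thmref{thm:bvm}. With $n=N_{shot}K\to\infty$ as $\epsilon\to 0$, the theorem gives that with probability $1-o(1/\sqrt n)\ge 0.9$ the posterior $R_n$ on $\theta$ lies within $\ell_1$ distance $c/\sqrt n$ of the Gaussian $Q$ centered at $\theta^*$ with variance $1/I_{\mathrm{tot}} = \Theta(\epsilon^2/N_{shot})$, i.e. standard deviation $\sigma=\Theta(\epsilon/\sqrt{N_{shot}})$. Choosing the constant $N_{shot}$ large enough that $3\sigma\le\epsilon$, the interval $A=[\theta^*-\epsilon,\theta^*+\epsilon]$ carries $Q$-mass at least $1-\delta$, and the total-variation bound $\tfrac12\norm{R_n-Q}_1\le c/(2\sqrt n)$ transfers this to $R_n(A)\ge 1-\delta'$. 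Since the grid spacing is $\pi\epsilon/2\approx 1.57\epsilon$, the width-$2\epsilon$ interval $A$ contains at least one and at most two grid points, so $\max_{\theta\in A}R_n(\theta)\ge (1-\delta')/2$, while every grid point outside $A$ has posterior mass at most $R_n(A^c)\le\delta'$. For small $\delta'$ we have $(1-\delta')/2>\delta'$, so the highest-probability grid point lies in $A$ and is therefore within $\epsilon$ of $\theta^*$, which is exactly the output guarantee.

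The main obstacle I anticipate is bridging \thmref{thm:bvm} as stated to the present setting. The theorem is for i.i.d. samples, whereas the schedule draws from a \emph{different} Bernoulli law at each depth $m_k$; making the argument rigorous requires a non-identically-distributed version of Bernstein--Von Mises in which the factor $nI_f(\alpha^*)$ is replaced by the aggregate $I_{\mathrm{tot}}$, and in which the regularity conditions hold in the appropriate uniform sense. This is precisely the content of the phrase ``settings where the Bernstein--Von Mises theorem is applicable,'' and it is the genuinely conditional part of the proof. A secondary technical point is the mode-recovery step: $\ell_1$ closeness to a Gaussian does not by itself control the argmax, so the argument must lean on the grid being coarse enough (only $O(1)$ points inside $A$) relative to $\sigma$, together with the reconciliation of the continuous BvM posterior with the discrete grid posterior the algorithm actually maintains.
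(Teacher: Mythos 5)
Your proposal follows essentially the same route as the paper: compute the aggregate Fisher information of the power-law schedule to obtain the resource counts $N$, $D$ and the posterior variance, then invoke the Bernstein--Von Mises theorem to conclude that the posterior concentrates within $\epsilon$ of the true value. If anything, you are more careful than the paper in two places it glosses over --- the mode-recovery step (converting $\ell_1$ closeness of the posterior to a Gaussian into control of the discrete argmax on the $\pi\epsilon/2$ grid, which the paper's proof does not spell out) and the fact that the schedule's samples are independent but not identically distributed, which the paper absorbs into the hypothesis that the theorem ``is applicable''; your reparametrization to $\theta$, giving the clean per-sample information $4(2m_k+1)^2$, is a harmless cosmetic difference from the paper's $\alpha$-parametrization.
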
 

\begin{proof} 
The total number of oracle calls for Algorithm \ref{AE:pls} is  $N= \sum_{k \in [K] } N_{shot} (2m_{k}+1)$ while the Fisher information for the power law schedule can be computed as $I_{f}(\alpha)= \frac{N_{shot}}{\alpha(1-\alpha)}  \sum_{k \in [K]}  (2m_{k} + 1)^{2}$ using proposition \ref{propf}.  Approximating the sums in $N$ and $I_{f}(\alpha)$ by the corresponding integrals we have $I_{f}(\alpha) = O(K^{1/\beta})$,  $N= O(K^{(1+\beta)/2\beta})$
and maximum depth $D=O(K^{(1-\beta)/2\beta})$, so that $ I_{f}(\alpha) = O(ND)$. Note that for $K=\max \en{ \frac{1}{\epsilon^{2\beta}} , \log (1/\epsilon)}$, we have 
$N= O( \frac{1}{ \epsilon^{1+\beta}})$ and $D= O( \frac{1}{ \epsilon^{1-\beta}})$ and $I_{f}(\alpha) = O( \frac{1}{\epsilon^{2}})$.

It remains to show that with probability at least $0.9$, the estimate output by the algorithm is within additive error $\epsilon$ of the true value.  
Applying the Bernstein Von-Mises Theorem, the $\ell_{1}$ distance between the posterior distribution and the Gaussian with mean and variance $(\alpha^{*},  1/ \sqrt{N_{shot}I_{f}(\alpha^*)})$
is at most $c/\sqrt{N_{shot}}$ with probability at least $1- c'/\sqrt{N_{shot}}$ for some constants $c, c'>0$. 

Choosing $N_{shot}> \en{ \frac{\max(c, c')}{\delta}} ^{2}$ for $\delta=0.05$, the estimate is within $(\alpha^{*} \pm \frac{3 \delta}{c \sqrt{ I_{f}(\alpha^{*}) }})$ with probability at least $1-\delta (1+1) - 0.0013 \geq 0.89$. The success probability can be boosted to $1-\frac{1}{poly(\zeta)}$ for $\zeta>0$ by running the algorithm $O(\log (1/\zeta))$ times and outputting the most frequent estimate. 

\end{proof} 

\noindent The above proof analyzes Algorithm  \ref{AE:pls}  in settings where the Bernstein Von Mises theorem is applicable. The complete list of regularity conditions required for the theorem are enumerated in Appendix \ref{BvM}. In high level, for some neighborhood around the real value of $\theta$ they impose:  the smoothness of the prior distribution; the smoothness of the density function $f(X, \theta)$, which will be satisfied if the norm of log-likelihood is bounded around $\theta$; and the smoothness and differentiability of the Fisher information around $\theta$, which will be true if the log-likelihood function has derivatives of order at least $3$. 

Figure 1 plots the log-likelihood function for the power law AE for a fixed exponent $\beta$ and varying depths and a true value for $\theta$ chosen 
uniformly at random from $[0, \pi/2]$. The figure illustrates that the log-likelihood function is smooth in a neighborhood around the true value indicating that the regularity conditions for the Bernstein-Von Mises theorem are plausible in this setting. Adding noise may further regularize the log-likelihood functions and enforce the regularity conditions required for the Bernstein-Von Mises theorem. 

The constants $c, c'$ in the proof of Theorem \label{thm:Bvm_proof_of_MLE} can be determined explicitly if the regularity conditions are verified giving an explicit value for $N_{shot}$. In the absence of an explicit value, experimental results demonstrating convergence for a small value of $N_{shot}$ can be taken as evidence that $N_{shot}$ is a moderately small constant.


\subsection{Power law amplitude estimation with noise} 

We perform here an analysis similar to the work in \cite{T20}. Noise provides a natural constraint on accessible circuit depths and noise models exponentially penalize larger depths, leading to an exponential decoherence of quantum states; therefore, exponentially more classical samples are needed to battle the noisy information. In this section, we explain this effect on our algorithm in the case of the depolarizing noise model. 

\begin{proposition}[\cite{T20}]\label{prop:prob_under_noise}
Assuming a depolarizing noise channel with a per-oracle-call rate of $\gamma \geq 0$, if measurements in the standard basis are performed on a quantum circuit of $k$ sequential calls to the oracle $U$, the distribution of measurement outcomes is a Bernoulli random variable with success probability 
\begin{equation}\label{eq:depoldef}
 p=\frac{1}{2}- \frac{1}{2}e^{-\gamma (2k+1)}\cos(2 (2k+1)\theta).
 \end{equation}
\end{proposition}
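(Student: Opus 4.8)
The plan is to track the last-qubit measurement statistics through a density-matrix calculation, exploiting the fact that a depolarizing channel merely rescales the coherent (traceless) part of the state while fixing the maximally mixed background. Write the ideal post-circuit pure state as $\ket{\psi_k} = \cos((2k+1)\theta)\ket{x,0} + \sin((2k+1)\theta)\ket{x',1}$, as established in the preliminaries, and let $\Pi_1$ be the projector onto the subspace in which the last qubit equals $1$, so that in the noiseless case the outcome-$1$ probability is $\Tr[\Pi_1 \ket{\psi_k}\bra{\psi_k}] = \sin^2((2k+1)\theta)$. I model the noise by assigning to each of the $k$ sequential oracle calls a depolarizing channel $\mathcal{E}_\lambda(\rho) = \lambda \rho + (1-\lambda)\frac{I}{d}$ with $\lambda = e^{-\gamma}$, where $d$ is the Hilbert-space dimension.

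First I would record the composition law for depolarizing channels: a one-line computation, using that $\frac{I}{d}$ is a fixed point and that the traceless part is scaled by each factor, gives $\mathcal{E}_{\lambda_1} \circ \mathcal{E}_{\lambda_2} = \mathcal{E}_{\lambda_1 \lambda_2}$. Applying this across the $k$ calls collapses them into a single effective depolarizing channel with parameter $e^{-\gamma k}$, so the state actually presented to the measurement is
\[
\rho_k = e^{-\gamma k}\,\ket{\psi_k}\bra{\psi_k} + (1 - e^{-\gamma k})\,\frac{I}{d}.
\]

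Next I would compute the outcome-$1$ probability by linearity of the trace: $p = \Tr[\Pi_1 \rho_k] = e^{-\gamma k}\sin^2((2k+1)\theta) + (1-e^{-\gamma k})\,\Tr[\Pi_1 I/d]$. Since $\Pi_1$ projects onto exactly half of the space, the maximally mixed state weights the two last-qubit outcomes equally and $\Tr[\Pi_1 I/d] = \tfrac12$, independent of $d$. Substituting $\sin^2((2k+1)\theta) = \tfrac12 - \tfrac12\cos(2(2k+1)\theta)$ and collecting terms, the $e^{-\gamma k}/2$ and $(1-e^{-\gamma k})/2$ contributions combine to a bare $\tfrac12$, leaving exactly $p = \tfrac12 - \tfrac12 e^{-\gamma k}\cos(2(2k+1)\theta)$; the measurement is Bernoulli by construction since the last qubit is binary.

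The trace evaluations and the double-angle identity are mechanical; the content sits in the composition law together with the convention that one oracle call contributes the factor $e^{-\gamma}$. The main thing to get right is that depolarizing channels multiply their surviving-coherence parameters under composition, so that $k$ calls yield $e^{-\gamma k}$ rather than, say, $1 - k\gamma$, and that the maximally mixed background contributes exactly $\tfrac12$ to each binary outcome regardless of the ambient dimension $d$ — this $d$-independence is what makes the final expression clean. I would also note that this treats the per-oracle-call rate as defining the effective global depolarizing parameter, following \cite{T20}.
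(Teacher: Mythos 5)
Your derivation is correct, and it supplies an argument the paper itself omits: Proposition \ref{prop:prob_under_noise} is stated with only a citation to \cite{T20}, so there is no in-paper proof to compare against. The density-matrix bookkeeping — composing the per-call depolarizing channels into a single $\mathcal{E}_{e^{-\gamma k}}$, evaluating $\Tr[\Pi_1 \rho_k]$ by linearity, noting $\Tr[\Pi_1 I/d]=\tfrac12$, and applying the double-angle identity — reproduces \eqnref{eq:depoldef} exactly, and your convention (one factor of $e^{-\gamma}$ per Grover iterate rather than per application of $U$ or $U^{-1}$, of which there are $2k+1$) is the one needed to match the stated formula. One step deserves to be made explicit rather than implicit: the $k$ noise channels are not adjacent but interleaved with the unitaries $U R_0 U^{-1} S_0$, so before invoking the composition law $\mathcal{E}_{\lambda_1}\circ\mathcal{E}_{\lambda_2}=\mathcal{E}_{\lambda_1\lambda_2}$ you must commute each channel past the subsequent unitaries. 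This is immediate for \emph{global} depolarizing noise, since $V(I/d)V^{\dagger}=I/d$ gives $\mathcal{E}_\lambda(V\rho V^{\dagger})=V\mathcal{E}_\lambda(\rho)V^{\dagger}$, but it is precisely the point at which the argument would fail for, say, local per-qubit depolarizing noise, so it is worth a sentence. You might also remark that the paper's sign convention is loose — \eqnref{eq:depoldef} reduces at $\gamma=0$ to $\sin^{2}((2k+1)\theta)$, i.e.\ the outcome-$1$ probability, whereas the noiseless discussion calls $\cos^{2}((2k+1)\theta)$ the success probability — your choice of $\Pi_1$ matches the formula as written and is the right one.
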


Let $\{m_k\}_{k=0,\dots, K}$ be a schedule. The Fisher information with respect to the angle $\theta$ in the presence of depolarizing noise with parameter $\gamma$ is given by 

\begin{align}\label{fisher_noisy} 
	I_f(\theta)=4N_{shot}\sum_{k=0}^K (2m_k+1)^2\frac{e^{-2\gamma (2m_k+1)} \sin^2(2(2m_k+1)\theta)}{1-e^{-2\gamma (2m_k+1)} \cos^2(2(2m_k+1)\theta)},
\end{align} 

\noindent
see \cite{T20} for a proof. Recall that $\alpha = \sin^2(\theta)$ is the probability of success without depolarizing noise. For $k$ such that $\gamma m_k$ becomes bigger than some large enough constant, the second term in the sum will be exponentially suppressed, the Fisher information will not increase significantly even if we keep increasing our depth. In practice, we do not want to use a Fisher information which is dependent on the parameter $\theta$ to be estimated; we can obtain a simple $\theta$-independent upper bound to \eqref{fisher_noisy} by using the inequalities  $\frac{1}{1-x}\leq 1+x$ and $e^{x} \geq 1+x$, which gives us:
\begin{equation}\label{eq:fisher_approx}
	I_f(\theta) \leq 4N_\text{shot}\sum_{k=0}^K(2m_k+1)^2e^{-2\gamma m_k}\,.
\end{equation}

Let us consider the power law schedule given by $m_k=\lfloor  k^\eta \rfloor$ for $k=0,1,\dots, K$ for $\eta= \frac{1-\beta}{2\beta}$ and $\beta \in (0,1]$, as defined in Algorithm \ref{AE:pls}. We see that 
\begin{align*}
	I_f(\theta) \leq &\sum_{\gamma \lfloor  k^\eta \rfloor\leq 1}(2\lfloor  k^\eta \rfloor+1)^2C+\text{ exponentially suppressed terms }
	\\ & \leq 10C\sum_{k=0}^{(1/\gamma)^{1/\eta}}(2k^\eta+1)^2 \leq  C'/\gamma^{2+1/\eta} = C'/\gamma^{2/(1-\beta)}.
\end{align*}
Here $C$ is a constant that can taken to be $C:=\frac{4N_{shot}}{\sin^4(2\theta)}$ (if this diverges, add small random noise to $\theta$).
So by the Cram\'er-Rao bound, we have that 
\begin{equation}\label{cramer}
	I_f(\theta) \leq \frac{C'}{\gamma^{2/(1-\beta)}}\Rightarrow \epsilon \geq c\gamma^{1/(1-\beta)}, 
\end{equation}
where $c:=C'^{-1/2}$.
This implies that given a noise level $\gamma$ we cannot get an error rate $\epsilon$ smaller than $c\gamma^{1/(1-\beta)}$ by increasing the depth for a power law schedule with a \emph{fixed} parameter $\beta$. Seeing this tradeoff from the other side, given a desired error rate $\epsilon$ and a noise level $\gamma$, we know how to pick the parameter $\beta$ to match the lower bound in equation (\ref{cramer}). Here we are assuming that the regularity conditions of Bernstein-von Mises also holds in the noisy setting, so that we are able to match the lower bound on $\epsilon$. 

If the noise level is smaller than the desired error rate, then we can pick the exponential schedule (for $\beta$ equal to 0) since we can apply circuits of depth up to $O(1/\epsilon)$. For $\epsilon<\gamma$, we assume that we can match the lower bound in equation (\ref{cramer}) so that $\beta$ can be picked as follows.

\begin{proposition}\label{prop:pwl_exponent_noise}
	Assume we are given as input the target error $\epsilon$ and noise level $\gamma$ with $0 <  \epsilon < \gamma < 1$ and that the regularity conditions of the Bernstein-von Mises hold in the noisy setting. The parameter $\beta$ of the power law algorithm can be picked as
	\[ \beta \geq 1 - \frac{ \log {\gamma}}{\log {\epsilon/c} }
	\]
	to achieve $\epsilon\leq c\gamma^{1/(1-\beta)}$ for the powerlaw schedule with parameter $\eta = \frac{1-\beta}{2\beta}$ (with high probability).
\end{proposition}


%

\section{QoPrime: A number theoretic amplitude estimation algorithm} \label{QoPrime}

\subsection{Preliminaries} 
We introduce the main technical tools needed for the QoPrime AE algorithm, these are the Chinese remainder theorem and the additive form of the Chernoff bounds. 
\begin{theorem} \label{crt} 
[Chinese Remainder Theorem] Let $a_{i} \in \N, i \in [k]$ be pairwise coprime numbers such that for all $i\neq j$, $gcd(a_{i}, a_{j})=1$ and let $N=\prod_{i \in [k]} a_{i}$. 
Then for all $b_{i} \in [a_{i}], i \in [k]$ there is an efficient algorithm to find $M \in [N]$ such that $M \mod a_{i} = b_{i}$. 
\end{theorem}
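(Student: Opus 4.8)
The plan is to give a constructive proof: I will exhibit an explicit formula for $M$ built from modular inverses, verify that it satisfies all the prescribed congruences, argue that the solution is unique in $[N]$, and finally observe that every ingredient is computable in polynomial time via the extended Euclidean algorithm, which is exactly what upgrades the existence claim to an \emph{efficient algorithm}.

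First I would introduce the complementary products $N_i := N/a_i = \prod_{j \neq i} a_j$ for each $i \in [k]$. The key structural fact is that $\gcd(N_i, a_i) = 1$: since $a_i$ is coprime to every $a_j$ with $j \neq i$, it is coprime to their product $N_i$ (coprimality is preserved under taking products). Consequently $N_i$ is invertible modulo $a_i$, and I can run the extended Euclidean algorithm on $N_i$ and $a_i$ to obtain an integer $M_i$ with $N_i M_i \equiv 1 \pmod{a_i}$. I then define
\[
M \equiv \sum_{i \in [k]} b_i\, N_i\, M_i \pmod{N},
\]
taking the unique representative in $[N]$.

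Next I would verify correctness. Fix $i$. For every $j \neq i$ the term $b_j N_j M_j$ is divisible by $a_i$, because $a_i$ is one of the factors of $N_j$ and hence $a_i \mid N_j$. Therefore, modulo $a_i$ only the $i$-th summand survives, giving $M \equiv b_i N_i M_i \equiv b_i \cdot 1 = b_i \pmod{a_i}$, as required. For uniqueness, if $M$ and $M'$ both satisfy all $k$ congruences then $a_i \mid (M - M')$ for every $i$; pairwise coprimality of the $a_i$ then forces $N = \prod_i a_i$ to divide $M - M'$, so the solution in $[N]$ is unique.

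Finally, for efficiency I would note that forming the products $N_i$, running the extended Euclidean algorithm to obtain each inverse $M_i$, and assembling the final sum modulo $N$ are all polynomial in the total bit length of the moduli $a_i$. Since this is a classical and standard result, there is no genuine obstacle; the only points requiring care are justifying that pairwise coprimality upgrades to $\gcd(N_i, a_i) = 1$ and that the extended Euclidean algorithm supplies the inverses $M_i$ constructively and efficiently, which is precisely what makes the procedure an explicit algorithm rather than a bare existence statement.
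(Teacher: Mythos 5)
Your proof is correct, but it takes a genuinely different route from the paper's. The paper first handles $k=2$ by computing B\'ezout coefficients $u_1 a_1 + u_2 a_2 = 1$ via the extended Euclidean algorithm and setting $M = (b_2 u_1 a_1 + b_1 u_2 a_2) \bmod a_1 a_2$, and then handles general $k$ by iteratively merging two moduli at a time (replacing the pair $a_1, a_2$ by the single modulus $a_1 a_2$ and repeating). You instead give the direct ``interpolation'' formula $M \equiv \sum_i b_i N_i M_i \pmod{N}$ with $N_i = N/a_i$ and $M_i \equiv N_i^{-1} \pmod{a_i}$, computing all $k$ inverses up front. The two constructions are equally standard and both run in polynomial time via the extended Euclidean algorithm. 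Your version has the advantage of being a single closed-form expression and of making uniqueness modulo $N$ immediate (which you prove and the paper omits, though it is implicit in the invertibility of the CRT map used later); the paper's pairwise-merge version only ever needs the two-modulus identity as a primitive and mirrors more closely how the reconstruction is actually iterated in the QoPrime algorithm. One small point of care in your argument that you handled correctly: the step $\gcd(N_i, a_i) = 1$ does require the observation that coprimality to each factor implies coprimality to the product, which you state explicitly.
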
 
\begin{proof} 
First we provide the proof for $k=2$. Applying the extended Euclidean algorithm we can find $u_{1}, u_{2} \in \Z$ such that, 
\al{ 
1= gcd(a_{1}, a_{2}) = u_{1} a_{1} + u_{2} a_{2} 
} 
Then $M= (b_{2} u_{1} a_{1} + b_{1} u_{2} a_{2}) \mod a_{1} a_{2} $ satisfies $M = b_{i} \mod a_{i}$ for $i = 1,2$. 

For the proof of the general case, note that the $k-1$ numbers 
$a_{1}a_{2}, a_{3}, a_{4}, \cdots, a_{k}$ are coprime and by the above argument the constraints $M= b_{i} \mod a_{i}$ for $i=1,2$ is equivalent to $M = (b_{2} u_{1} a_{1} + b_{1} u_{2} a_{2}) \mod a_{1} a_{2} $. 
The procedure can therefore be repeated iteratively to find the desired $M$.
 
\end{proof} 
\noindent In this paper, we will be using relatively coprime moduli $a_{i}$, however the results can easily be adapted to a setting where the $a_{i}$ are not 
pairwise coprime replacing $N=\prod_{i \in [k]} a_{i}$ by the least common multiple of the $a_{i}$. We define explicitly the bijection given by the Chinese remainder theorem 
as it is used later in the algorithm. 

\begin{defn} \label{def:crt} 
Given pairwise co-prime moduli $N_{i}, i \in [k]$, let $N= \prod_{i \in [k]} N_{i}$. The function $CRT: \prod \Z_{N_{i}} \to Z_{N}$ on input $( M_{1}, M_{2}, \cdots, M_{k})$ evaluates to the unique $M \in [N]$ given by Theorem \ref{crt} such that $M= M_{i} \mod N_{i}$. 
\end{defn}

The second tool needed for the QoPrime algorithm is the additive form of the Chernoff bound and some supplementary calculations on the entropy of the binomial distribution. 
\begin{theorem} [Chernoff-Hoeffding Bound \cite{H94}] \label{chb} 
Let $X_{i}$ for $i \in [m]$ be i.i.d. random variables such that $X_{i} \in \{0,1 \}$ with expectation $E[X_{i}]=p$ and let $\epsilon>0$ and $X=\frac{1}{m}  \sum_{i \in [m]} X_{i}$. Then, 
\begin{enumerate} 
\item $\Pr [ X > p +  \epsilon  ] \leq  e^{-D(p+\epsilon || p) m}$. 
\item $\Pr [ X < p -  \epsilon  ] \leq  e^{-D(p-\epsilon || p) m}$. 
\end{enumerate} 
where the relative entropy $D(x||y) = x \ln \frac{x}{y} + (1-x) \ln \frac{(1-x)}{(1-y)}$ where the relative entropy can be lower bounded using the inequality $D(x|| y) \geq \frac{(x-y)^{2}} {2 \max(x, y) }$ for all $x, y \in [0,1]$.  
\end{theorem} 
The lower bound on the relative entropy is required to make the Chernoff bounds effective, we
and derive a corollary that will be useful for the analysis of the QoPrime algorithm. 
\begin{corollary} \label{cor:chb} 
The following lower bound holds for the relative entropy for all $x, y \in [0,1]$, 
$$D(x|| y) \geq \frac{(x-y)^{2}} {2 \max(1-x, 1-y) }$$ . 
\end{corollary} 
\begin{proof} 
The relative entropy is symmetric under the substitution $(x, y) \to (1-x, 1-y)$, that is $D(x||y) = D((1-x)|| (1-y))$. The result follows by applying the the inequality $D(x|| y) \geq \frac{(x-y)^{2}} {2 \max(x, y) }$
for $(x', y') = (1-x, 1-y)$. 
\end{proof} 
We also state the Multiplicative Chernoff bounds which will be used to compute some constants in the Qo-Prime algorithm. 
\begin{theorem} [Multiplicative Chernoff Bounds]
Let $X_{i}$ for $i \in [m]$ be independent random variables such that $X_{i} \in [0,1]$ and let $X= \sum_{i \in [m]} X_{i}$. Then, 
 $\Pr [ |X - \E[X]| \geq \beta \E[X] ] \leq e^{ -\beta^{2} \E[X]/3}$ for $0< \beta <1$. 
\end{theorem}

\subsection{The QoPrime AE algorithm} 
The QoPrime AE algorithm is presented as Algorithm \ref{AE:crt}. The implementation of the steps of the algorithm is described in greater detail below and the algorithm is analyzed to establish correctness and bound the running time. 

An amplitude estimation algorithm has access to a quantum circuit $U$ such that $U\ket{0^t} = \cos (\theta) \ket{x, 0} + \sin(\theta) \ket{x', 1}$
where $\ket{x}, \ket{x'}$ are arbitrary states on $(t-1)$ qubits. More precisely, let $R_{0}$ 
be the reflection in $\ket{0^{t}}$, that is $R_{0} \ket{0^t} = \ket{0^t}$ and $R_{0} \ket{0^{\perp}  } = - \ket{0^{\perp}}$ and let $S_{0}$ be the reflection on 
$\ket{0}$ in the second register, that is $S_{0} \ket{x, 0} = \ket{x,0}$ and $S_{0} \ket{x,1} = - \ket{x,1}$ for all $\ket{x}$. Like the standard AE algorithm, the QoPrime algorithm uses $(2k+1)$ 
sequential applications of the circuit for $U$ to create the states, 
\all{ 
\ket{\phi_{k} } := (U R_{0} U^{-1} S_{0} )^{k}  U\ket{0}  = \cos ((2k+1)\theta) \ket{x, 0} + \sin((2k+1) \theta) \ket{x', 1}
} {eq:ae} 
An oracle call refers to a single application of the circuit $U$, the total number of oracle calls made is a measure of the running time for an AE algorithm. The maximum circuit depth for an amplitude estimation procedure is the number of sequential calls to $U$. The creation of a single copy of the state $\ket{\phi_{k} }$ in equation \eqref{eq:ae} requires $(2k+1)$ oracle calls.

The QoPrime algorithm is parametrized by integers $(k, q)$ where $k \geq 2$ and $1 \leq q \leq (k-1)$, the parameter $k$ is the number of moduli used for the reconstruction procedure
while $q$ determines the number of moduli that are grouped together. 
The algorithm starts by choosing nearby odd coprime integers $(n_1, \, n_2, \dots, n_k)$, where each coprime is an integer approximately equal to $\left(\frac{\pi}{2\epsilon}\right)^{1/k}$ . For the asymptotic purposes, all we need is that each coprime is therefore $n_i=\Theta(\epsilon^{-1/k})$.
The $k$ coprimes are partitioned into $k/q$ groups, of size at most $q$. Let $\pi_i \subset [k]$ be the subset of coprimes in group $i$, and let $N_{i}= \prod_{j \in  \pi_{i}} n_{j}$ for $i \in \lceil k/q \rceil$ be the product of the 
coprimes in each group. For each group $i$, we will sample at a depth of $N / N_i$.

Let $\theta = \frac{ \pi M}{ 2N}$ be the true value of $\theta$ for $M = \lfloor M \rfloor + \{ M \}$ where $M \in [0,N] $ and $N= \prod_{i \in [k]} n_{i}$. 
The QoPrime algorithm reconstructs estimates $\overline{M_{i}}$ that are within a $1/2$ confidence interval around $M \mod N_{i}$ with high probability. These estimates are 
constructed using samples from depth $N/N_{i}=  O(1/\epsilon^{1-q/k} )$ sequential calls to $U$ to prepare the states in equation \eqref{eq:ae} followed by $O(N_{i}^{2})$ measurements in the standard basis and 
reconstruction using the Chernoff bounds. For each group $\pi_i$ of coprimes, measurements at depth $N/N_i$ will give us information about the moduli $M \mod n_j$, for the coprimes in this particular group $n_j \in \pi_i$.

By repeating this for all groups $\pi_i$, we acquire information about all moduli $M \mod n_j$. These low-precision estimates are then combined using the Chinese remainder theorem to recover $M \mod N$, and therefore an $\epsilon$-accurate estimation for $\theta=  \frac{ \pi M}{ 2N}$. As we shall see, there is a sign ambiguity associated with each group of coprimes, and a recursive call is needed to determine $\theta$
unambiguously. The QoPrime algorithm makes a total $\tilde{O}(1/\epsilon^{1+q/k} )$
oracle calls for estimating $\theta$ within accuracy $\epsilon$ where $\tilde{O}$ hides factor that are logarithmic in $k, q$ and the success probability for the algorithm. It trades off the maximum circuit depth needed for amplitude estimation against the total 
number of oracle calls.

\begin{algorithm} [H]
\caption{The QoPrime Algorithm for Amplitude Estimation} \label{AE:crt} 
\begin{algorithmic}[1]
\REQUIRE Accuracy $\epsilon$ for estimating $\theta$ and parameters $(k, q)$ where $k \geq 2$ is the number of moduli and $1\leq q \leq (k-1)$. 
Desired success probability $p$ and value $c$ such that $1-2ke^{-2c} > p$. 
\REQUIRE Access to unitary $U$ such that $U\ket{0^T} = \cos (\theta) \ket{x, 0} + \sin(\theta) \ket{x', 1}$. 
\ENSURE An estimate of $\theta$ within accuracy $\epsilon$ with probability at least $p$. 

\IF{$q/k> 1/3$}
\STATE Using at most $\frac{24c}{\epsilon^{1+ q/k}}$ samples from $U\ket{0^T}$, find $\theta'$ such that $|\theta'- \theta| \leq \frac{1}{2}\epsilon^{1-q/k}$ with probability at least $1- e^{-2c}$.
\ELSE 
\STATE Invoke the QoPrime algorithm recursively with accuracy $\frac12\epsilon^{1-q/k}$ and parameters $(q',k')=(2q, k)$ such that $1+q/k < 1+ q'/k' < \frac{(1+q/k)} { (1-q/k)} $ to obtain $\theta'$ such that  $|\theta'- \theta| \leq \frac{1}{2}\epsilon^{1-q/k}$.
\ENDIF

\STATE Select $k$ adjacent coprimes from the table in Section \ref{sec33} starting at $\lfloor 1/\epsilon^{1/k} \rfloor$ with product closest to $\pi/(2 \epsilon)$ in absolute value, and let $N=\prod_{i \in [k]} n_{i}$ be their product. \\

\STATE Partition $[k]$ into $K:=\lceil k/q \rceil$ groups $\pi_{i}$ of size at most $q$ and let $N_{i}= \prod_{j \in \pi_{i}} n_{j}$. 

 \STATE Special case: If $|\theta' - \pi/4|  \leq \frac{\min_i N_i }{4N}$, then invoke the QoPrime algorithm with the effective oracle $U'$ using the exact amplitude amplification technique \cite{BHMT02} as described in Lemma \ref{lemma:exactAA}.
 
\FOR{i=1 \TO K}
\STATE Prepare $100c N_{i}^{2}$ copies of $\ket{\phi_{(N-N_{i})/2N_{i}}}$ (see equation \eqref{eq:ae}) and measure in the standard basis. 
\STATE Compute $\widehat{l}_{i}=  \frac{2N_{i}} { \pi} \text{arccos} ( \sqrt{\widehat{p}} )$ where $\widehat{p}$ is the observed probability of outcome $0$.
\ENDFOR

\FOR{all possible sign ambiguity resolutions $s\in\{-1,1\}^{ K } $} 
\STATE For all $j \in [K]$, compute $\overline{M_{j}} = s_{j}\widehat{l}_{j} \mod N_{j}$. 
\STATE For all $j \in [K]$, compute $M_{j} = \lfloor \overline{M_{j}} + \beta_{j}  \rfloor$ where $|\beta_{j}|  \leq 1/4$ are such that the fractional parts $\{ \overline{M_{j}} + \beta_{j} \} = \alpha$ for some $\alpha \in [0,1]$. 
\STATE Let $ \overline{M}  = CRT ( M_{1}, M_{2}, \cdots, M_{K})$, compute the sign-dependent estimate $\theta_{s} = \frac{\pi (\overline{M} + \alpha)  }{2N}$. 
\ENDFOR 

\STATE Output the  choice of angle $\theta_{s}$ that minimizes $|\theta_s  - \theta'|$ over all possible choices of $s\in\{-1,1\}^{ K } $.

\end{algorithmic}
\end{algorithm}

The procedures used in the individual steps of the QoPrime Algorithm \ref{AE:crt} are described next and correctness of the steps is established. Let $ M  = t N_{i} + l$ for some $t \in \Z$ and $0 \leq l \leq N_{i}$. Note that $(-1)^{t}M \mod N_{i} $ is the value being estimated in step 11 of the QoPrime algorithm making it necessary to determine the parity of $t$ (cf. equation \eqref{five}). The approach taken to resolve this ambiguity is to compute all the estimates 
corresponding to the possible parities of $t$ and comparing with an estimate of $\theta$ obtained recursively using the QoPrime algorithm with lower precision in steps 1-5 of the QoPrime algorithm. 
The first lemma in the analysis establishes the correctness of this procedure.

\begin{lemma} \label{lem:zero} 
The recursive procedure of the QoPrime algorithm terminates in at most $O(\log k)$ iterations and outputs an additive error $\min_{i \in [k]} (N_{i}/2N)$ estimate for $\theta$ with probability at least $1- 6e^{-2c}$. 
\end{lemma}  
\begin{proof} 
First we establish the correctness of the stopping condition in step 5. If $q/k\geq 1/3$ then $\frac{(1+q/k)}{2} \geq (1-q/k)$ and thus by the multiplicative Chernoff bounds 
 an additive error $\frac{\epsilon^{(1-q/k)}}{2}$ estimate for $\theta$ is obtained with $\frac{24c}{\epsilon^{(1+q/k)}}$ samples with probability at least $1- e^{-2c}$ for some constant $c>0$. 
 It remains to show that the recursion terminates in at most $O(\log k)$ steps and that the total number of oracle calls used in the recursive step is upper bounded by $O(1/\epsilon^{1+q/k})$. 

The recursive call to the QoPrime algorithm in step 4 uses $O(1/\epsilon'^{(1+q'/k')}) = O(1/\epsilon^{(1-q/k)(1+q'/k')})$ total oracle calls. As $(1-q/k)(1+q'/k')< (1+q/k)$ the total number of oracle calls 
used by steps 4-8 is at most $\tilde{O}(1/\epsilon^{1+q/k})$. The extra oracle calls used for these steps do not change the asymptotic number of oracle calls used by the QoPrime algorithm. Further, as 
$1+q/k < 1+ 2q/k < (1+ q/k)/(1-q/k)$ it is always feasible to choose $q'=2q, k'=k$ and with this choice the stopping condition $q/k\geq 1/3$ in step 5 will hold after at most $O(\log k)$ iterations.

The success probability for these steps is the same as the success probability for the final recursive call to the QoPrime algorithm in step 4, which uses at most $\lceil k/q \rceil \leq 3$ moduli by the stopping condition. 
 Further, Theorem \ref{thm1} shows that the the QoPrime algorithm succeeds with probability at least 
$1- 2ke^{-2c}$, implying that the recursive procedure succeeds with probability at least $1- 6e^{-2c}$.
\end{proof} 
The analysis shows that the QoPrime correctly estimates $\theta$ apart from one exceptional case when the angle is close to $\pi/4$ that is dealt separately in step 8. This case will be discussed later in the analysis
of the sign resolution procedure. 

Step 10 of the QoPrime algorithm computes a quantum state which can be measured to sample from a Bernoulli random variable with success probability $p=\cos^{2} (  \frac{ (t N_{i} + l ) \pi}{2N_{i} })$. 
Step 11 computes an estimate $\widehat{l} = \frac{2N_{i}} { \pi} \text{arccos} ( \sqrt{\widehat{p}} ) $ where $\widehat{p}$ is the observed probability of outcome $0$. 
The analysis below shows that $|\widehat{l} - (-1)^{t} l \mod N_{i}  | \leq 0.25$ with high probability.

In order to analyze these steps, we begin with the observation that if $p = \widehat{p}$ then $\widehat{l} = (-1)^{t} l \mod N_{i}$. 
\all{ 
\frac{2N_{i}} { \pi} \text{arccos} ( \sqrt{p} ) = \begin{cases} 
\frac{2N_{i}} { \pi} \text{arccos} \en{ \cos ( \frac{l\pi}{2N_{i}}) } \text { [if $t= 0 \mod 2$]}   \\
 \frac{2N_{i}} { \pi} \text{arccos} \en{ \sin ( \frac{l\pi}{2N_{i}}) } \text { [if $t= 1 \mod 2$]} 
 \end{cases}  
 =  (-1)^{t} l \mod N_{i} .
 } {five} 
The next Lemma quantifies the error made by the algorithm in approximating $(-1)^{t} l \mod N_{i}$ using the Chernoff bounds to bound the 
difference between $\widehat{p}$ and $p$.

\begin{lemma} \label{lem:one} 
Given integer $N$ and $m=100cN^{2} $ samples,  steps 10-11 of the QoPrime algorithm finds an estimate such that $|\widehat{l} - (-1)^{t} l \mod N | \leq 0.25$ with probability 
at least $1- 2e^{-2c}$. 
\end{lemma}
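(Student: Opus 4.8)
The plan is to reduce the claim to a statement about how accurately the empirical frequency $\widehat{p}$ estimates the true Bernoulli success probability $p = \cos^2\!\big(\tfrac{(tN_i+l)\pi}{2N_i}\big)$, and then to propagate this error through the explicit estimator map $g(\widehat p) = \tfrac{2N_i}{\pi}\arccos(\sqrt{\widehat p})$. The computation established above already records the key exact identity $g(p) = (-1)^t l \bmod N_i =: l^*$ (valid when $\widehat p = p$), so writing $\widehat l = g(\widehat p)$ the quantity we must bound is $|\widehat l - l^*| = |g(\widehat p) - g(p)|$. The natural tool is the mean value theorem together with the derivative
\[
g'(x) = -\frac{N_i}{\pi\sqrt{x(1-x)}},
\]
which I would compute directly from $\frac{d}{dx}\arccos(\sqrt x) = -\frac{1}{2\sqrt{x(1-x)}}$.

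First I would use the hypothesis $l \in [N_i/6, 5N_i/6]$ to control the Lipschitz constant. Under this hypothesis $l^* \in [N_i/6, 5N_i/6]$ as well (since $N_i - l$ lands in the same interval), so the angle $\tfrac{l^*\pi}{2N_i}$ lies in $[\pi/12, 5\pi/12]$; writing $p(1-p) = \tfrac14\sin^2\!\big(\tfrac{l^*\pi}{N_i}\big)$ and using $\tfrac{l^*\pi}{N_i} \in [\pi/6, 5\pi/6]$ gives $p(1-p) \ge \tfrac{1}{16}$. Hence $p$ is bounded away from $0$ and $1$ by a universal constant and $|g'(p)| \le \tfrac{4N_i}{\pi}$. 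Once $|\widehat p - p|$ is shown to be $O(1/N_i)$, the intermediate point $\xi$ in the mean value theorem stays in a region where $x(1-x)$ is bounded below by a fixed constant, so $|\widehat l - l^*| \le C N_i\, |\widehat p - p|$ for an explicit constant $C$; taking the target accuracy $|\widehat p - p| \le \tfrac{\pi}{16 N_i}$ then yields $|\widehat l - l^*| \le 0.25$.

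It remains to establish that $m = 100 c N_i^2$ samples suffice to guarantee $|\widehat p - p| \le \tfrac{\pi}{16 N_i}$ with probability at least $1 - 2e^{-c}$. I would apply the multiplicative Chernoff bound with relative deviation $\beta = \tfrac{\pi}{16 N_i p}$: since $\E[X] = mp$ with $X$ the number of zero-outcomes, the failure probability is at most $2\exp\!\big(-\beta^2 m p/3\big) = 2\exp\!\big(-\tfrac{\pi^2 m}{768\, N_i^2\, p}\big)$, and substituting $m = 100 c N_i^2$ together with $p \le 1$ makes the exponent at least $c$, as required. The main obstacle is that $g'$ blows up as $\widehat p \to 0$ or $1$, so no global Lipschitz bound is available; the role of the range restriction $l \in [N_i/6, 5N_i/6]$ is precisely to pin both $p$ and (after the Chernoff step) $\widehat p$ away from these singular endpoints, and the argument must be ordered so that the Chernoff estimate is invoked before the Lipschitz bound, ensuring $\xi$ never approaches $0$ or $1$.
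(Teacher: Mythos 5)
Your proposal is correct and follows essentially the same route as the paper's proof: propagate the error through $F(x)=\tfrac{2N_i}{\pi}\arccos(\sqrt{x})$ via the mean value theorem, use the hypothesis $l\in[N_i/6,5N_i/6]$ to get $p(1-p)\geq 1/16$ and hence a Lipschitz bound of order $N_i$, and then apply a Chernoff bound to show $m=100cN_i^2$ samples give $|\widehat{p}-p|=O(1/N_i)$ with probability $1-2e^{-c}$. The only differences are cosmetic (you invoke the multiplicative Chernoff bound directly rather than through Corollary~\ref{c1}, and you are slightly more explicit about keeping the mean-value intermediate point away from the endpoints), so no changes are needed.
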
 
\begin{proof} 
Define $F:  [0,1] \to [ 0, N]$ as $F(p)= \frac{2N} { \pi} \text{arccos} ( \sqrt{p} )$ so that $F(p)= (-1)^{t} l \mod N$ by equation \eqref{five} and $F(\widehat{p})= \widehat{l}$. It is sufficient to show that $\Pr[ |F(p) -F(\widehat{p})|  \geq 0.25] \leq e^{-2c}$ for all $p \in [0, 1]$. 

The inverse function $G: [0, N] \to [0,1]$ such that $G(y) = \cos^{2} ( \frac{\pi y}{2N } )$ is monotonically decreasing. Let $F(p)=y$, then $F(p) -F(\widehat{p})  \geq 0.25$ is equivalent to 
$(y-0.25) \geq F(\widehat{p})$, that is $G(y-0.25)< \widehat{p}$. Applying the additive Chernoff bound, 
\al{ 
\Pr [ \widehat{p} > G(y-0.25)   ] &\leq e^{ -m D( G(y- 0.25) || G(y)) } 
} 
The analysis splits into two cases, where the relative entropy is lower bounded using either the inequality in Theorem \ref{chb} or Corollary \ref{cor:chb}. 
Let $A = \frac{\pi y}{2N }$ and $B= \frac{\pi (y-0.25)}{2N }$, first consider the case $y>N/2$ or equivalently $A> \pi/4$, 
\all{ 
m D( G(y- 0.25) || G(y)) &\geq m \frac{ (\cos^{2} ( A ) -  \cos^{2} ( B ))^{2}  } { 2\cos^{2} ( B ) } = m \frac{ (\cos( 2A ) -  \cos ( 2B )^{2}  } { 8\cos^{2} ( B ) } \nl 
&=  m \frac{ \sin^{2} (A+B) \sin^{2} (A-B)  } { 2\cos^{2} ( B ) } \nl 
&= \frac{m \sin^{2} (A-B) }{ 2} (\sin(A) + \cos(A)\tan B))^{2} \nl 
& \geq 50 c N^{2}  \sin^{2} (\frac{\pi}{8N}) \geq 50 c N^{2}  \frac{\pi^{2} }{128N^{2}}  > 2 c. 
} {eight} 
Note that for the last two steps in the computation, the inequality $(\sin(A) + \cos(A)\tan B))^{2}>1$ for $A>\pi/4$ and $A-B= \frac{ \pi} {8N}$ was used along with the lower bound $\sin(x)\geq x/2$ for $x \in [0, \pi/2]$. 
For the case $y<N/2$ we carry out a similar computation using Corollary \ref{cor:chb} to lower bound the relative entropy, 
\all{ 
m D( G(y- 0.25) || G(y)) &\geq m \frac{ (\cos^{2} ( A ) -  \cos^{2} ( B ))^{2}  } { 2\sin^{2} ( A ) } \nl 
&=  m \frac{ \sin^{2} (A+B) \sin^{2} (A-B)  } { 2\sin^{2} ( A ) } \nl 
&= \frac{m \sin^{2} (A-B) }{ 2} (\cos(B) + \sin(B)\cot A))^{2} \nl 
& \geq 50 c N^{2}  \sin^{2} (\frac{\pi}{8N}) > 2 c. 
} {none} 
For the last steps in the computation, the inequality $(\cos(B) + \sin(B)\cot (A))^{2}>1$ for $A<\pi/4$ and $A-B= \frac{ \pi} {8N}$ was used along with the lower bound $\sin(x)> x/2$  for $x \in [0, \pi/2]$.  

Similarly, $F(\widehat{p}) - F(p)  \geq 0.25$ is equivalent to $G(y+ 0.25) > \widehat{p}$ and the probability can be bounded using the additive Chernoff bound, 
\al{ 
\Pr [ \widehat{p} < G(y+0.25)   ] &\leq e^{ -m D( G(y+ 0.25) || G(y)) } 
} 
\noindent Further, $ mD( G(y+0.25) || G(y)) \geq 2c$ for all $y \in [0,N]$ with probability at least $1- e^{-2c}$, this follows from 
calculations similar to the ones in equations \eqref{eight} and \eqref{none} with denominators $(\cos^{2}(B), \sin^{2} (A))$ replaced by $(\cos^{2}(A), \sin^{2} (B))$. 
From the union bound, it follows that $|\widehat{l} - (-1)^{t} l \mod N | \leq 0.25$ with probability at least $1- 2e^{-2c}$. 

\end{proof} 
\noindent The remaining steps of the algorithm resolves the sign ambiguity in the estimates obtained in steps 10-11 by comparing against the recursive estimate. The following lemma bounds smallest possible difference between possible 
reconstructions produced by the Chinese Remainder Theorem up to sign ambiguities. 
\begin{lemma}\label{lemma:signs}
For an integer $0\leq M < N$, consider the Chinese remainder theorem bijection $M=\text{CRT}(m_1,\dots,m_K)$, where $M\equiv m_i\mod N_i$. Let $N_1<\dots <N_{K}$ be the ordered $K$ odd coprimes.

The only possible integers $M'\neq M$ such that $|M-M'|< \min N_i$ such that $M'$ can be obtained via CRT by changing some signs on the moduli, namely $M'=\text{CRT}(\pm m_1, \pm m_2, \dots, \pm m_k)$ are contained in an interval around $N/2$, 
\begin{equation}
	M' \in \left [ \frac{N-N_1}{2}+1,\,\frac{N+N_1}{2}-1\right ] \cap \mathbb{Z}\,.
\end{equation}
\end{lemma}
\begin{proof}
The CRT function is `continuous' in the sense that if $CRT( m_1, \dots, m_K ) = M$ then $CRT(m_1+a, \dots, m_K+a) = M+ a \mod N$ for any displacement $a$. We therefore examine the displacements of $-N_1+1\leq a \leq N_1-1$ around $M$ and seeing if they can be made to match with a different sign resolution. In other words, there should be such a choice of $a$ such that $m_i + a \equiv \pm m_i \mod N_i$ for all the moduli $N_{i}$ for $i\in[k]$. Since $|a|<N_1$ and $N_1$ can be assumed to be the smallest modulus, we see that this is not possible unless all the signs are flipped, i.e. $M'=\text{CRT}(-m_1,\dots,-m_K)$ and $m_i+a\equiv -m_i \mod N_i$, for all $i \in [K]$.

Second, since the $N_i$ are all odd, there are two possibilities for possible solutions $(M, M')$ (where $M$ is assumed to be larger) in terms of the parity of $a$:
\begin{itemize}
\item If $a$ is even, then $(M, M')= (N- a/2, a/2)$ are the unique solutions to the equations $2m_{i}+ a = 0 \mod N_{i}$. In this case, $|M'-M|=N-a \geq N-N_1 > N_1$, contradicting the hypothesis in the theorem statement. 
\item If $a$ is odd, then $(M, M')=((N+a)/2, (N-a)/2)$ are the unique solutions to the equations $2m_{i} + a = 0 \mod N_{i}$. For the QoPrime algorithm, this case is resolved by the depth $0$ measurements carried out in the recursive step.
\end{itemize}
\end{proof}

It is this second set of $N_1-1$ special points (out of the $N$ possibilities for $M$) which presents an issue to our algorithm from the point of view of resolving the sign ambiguity. These problematic angles lie in a small interval of $\theta\in\left[\frac{\pi(N-N_1)}{4N},\,\frac{\pi(N+N_1)}{4N}\right]$ around the midpoint $\pi/4$. We use the technique of exact amplitude amplification which allows us to map an angle away from a known interval, should it happen to be there. Note that the size of the problematic interval is of the order of the estimation accuracy for the recursive estimate in step 1-5, so only a small fraction of points need to be handled.

\begin{lemma}\label{lemma:exactAA}
[Exact amplitude amplification \cite{BHMT02}] If $\theta\in\left[\frac{\pi(N-N_1)}{4N},\,\frac{\pi(N+N_1)}{4N}\right]$, then appending an extra qubit in the state $(\ket{0} + \ket{1})/\sqrt{2}$ and selecting on $\ket{00}$ results in an oracle $U'\ket{0^t} = \cos(\theta') \ket{x, 00} +\sin(\theta') \ket{x', (00)^{\perp}}$ where $|\pi/4- \theta'| \geq 0.07\pi$ for $N_{1} \leq 0.04 N$. 
\end{lemma}
\begin{proof} 
If the angle $\theta=\pi/4$, then $\cos(\theta') = \cos(\theta)/\sqrt{2} = 1/2$ and thus $\theta' = \pi/3$ and the difference $|\pi/4- \theta'| \geq \pi/12\geq0.08 \pi$. The forbidden interval is 
asymptotically smaller than $[\pi/4 -0.01, \pi/4 + 0.01]$ for large enough $N$ (that is $N_{1}= O(N^{q/k}) \leq 0.04N$ for large enough $N$), the result follows. 
\end{proof}

We next describe the procedure in steps 15-17 of the QoPrime algorithm for estimating the values $M_i$ that will be used in the Chinese Remainder theorem. 
Define the confidence intervals $A_{i} = [ \{ \overline{M_{i}} \} - 0.25, \{ \overline{M_{i}} \}   + 0.25 ]$ corresponding to all the estimates $\{ \overline{M_{i}} \}$ produced 
by the QoPrime algorithm. Let $I = \bigcap_{i} A_{i}$ be the intersection of the $A_{i}$. Applying the union bound and Lemma \ref{lem:one} it follows that $I$ is non empty and the fractional part $\{ M \} \in I$ with probability at least $1- 2ke^{-c}$. Step 9 of the QoPrime algorithm is therefore able to find $\alpha \in I$. In Step 10, from $\alpha$ one finds $\beta_{i}  \in [-0.25, 0.25]$ such that $\{ \overline{M_{i}} + \beta_{i} \} = \alpha$ and then the value $M_i$ is computed as $M_{i} = \lfloor \overline{M_{i}} + \beta_{i}  \rfloor$. It remains to show that using the Chinese Remainder Theorem on these values produces an estimate with error $\epsilon$ with high probability.
\begin{theorem} \label{thm1} 
The estimate output by the QoPrime algorithm is within additive error $\epsilon$ of the true value with probability at least $1- 2ke^{-2c}$. 
\end{theorem}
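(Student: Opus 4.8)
The plan is to condition on the good event established immediately above---that for every group $i$ the estimate satisfies $|\overline{M_i} - M \bmod N_i| \le 0.25$---and then argue deterministically. By \propref{p2}, \propref{p3}, and the union bound over the $\lceil k/q\rceil \le k$ groups, this event holds with probability at least $1 - 2ke^{-c}$, and on it both $\{M\} \in I$ and the choice of $\alpha \in I$ in Step 17 are valid. It therefore suffices to show that on this event the output differs from $\theta = \pi M/2N$ by at most $\epsilon$. Write $\phi = \{M\}$ and $m_i = \lfloor M\rfloor \bmod N_i$; since $N_i$ is an integer and $\phi < 1 \le N_i$, reducing modulo $N_i$ does not touch the fractional part, so $M \bmod N_i = m_i + \phi$ and hence $\overline{M_i} = m_i + \phi + e_i$ with $|e_i| \le 0.25$.

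First I would pin down the number $y_i := \overline{M_i} + \beta_i$ computed in Step 18, where $M_i = \lfloor y_i \rfloor$. Because $\alpha \in A_i$ there is a point with fractional part $\alpha$ within distance $0.25$ of $\overline{M_i}$, and since consecutive points of fractional part $\alpha$ lie at distance $1 > 0.5$ this point is unique; thus $y_i = M_i + \alpha$ with $|y_i - \overline{M_i}| = |\beta_i| \le 0.25$. Equating the two expressions $y_i = M_i + \alpha$ and $y_i = m_i + \phi + e_i + \beta_i$ yields the integer identity $M_i - m_i = (\phi - \alpha) + (e_i + \beta_i)$, in which $e_i + \beta_i \in [-0.5, 0.5]$.

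The crux---and the step I expect to be the main obstacle---is to show that the offset $k_i := M_i - m_i$ is the \emph{same} integer for all groups, even though the errors $e_i$ are group dependent; without this, the CRT reconstruction would be meaningless, since a per-group off-by-one corrupts the combined value arbitrarily. The key observation is that $k_i$ is an integer lying within $0.5$ of the fixed real number $\phi - \alpha \in (-1,1)$, because $k_i - (\phi-\alpha) = e_i + \beta_i \in [-0.5,0.5]$; away from the boundary value $\phi - \alpha = \pm 0.5$ there is a unique such integer, namely the nearest integer $k$ to $\phi - \alpha$, so $k_i = k$ for every $i$, independently of $e_i$. (The borderline $\phi-\alpha = \pm 0.5$ has measure zero for random $\theta$ and can be absorbed into the final slack, or avoided by taking $I$ of length strictly below $0.5$.) Consequently $M_i \equiv \lfloor M\rfloor + k \pmod{N_i}$ for all $i$, and by the Chinese Remainder Theorem (\thmref{crt}) the reconstructed value is $\overline{M} = (\lfloor M\rfloor + k) \bmod N = \lfloor M\rfloor + k$, using $\lfloor M\rfloor + k \in [0,N)$ (the endpoint cases $\lfloor M\rfloor$ near $0$ or $N$ being handled separately).

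Finally I would bound the output error directly. The algorithm returns $\pi(\overline{M} + \alpha)/2N = \pi(\lfloor M\rfloor + k + \alpha)/2N$, while the true value is $\pi(\lfloor M\rfloor + \phi)/2N$, so the additive error equals $\tfrac{\pi}{2N}\,|k - (\phi - \alpha)|$. Since $k$ is the nearest integer to $\phi - \alpha$ we have $|k - (\phi-\alpha)| \le 0.5$, whence the error is at most $\tfrac{\pi}{4N}$, which is at most $\epsilon/4 \le \epsilon$ by the requirement $N \ge \pi/\epsilon$ imposed in Step 1. The entire argument turns on the cross-group consistency of the integer offset $k$: it is precisely the use of a single global fractional estimate $\alpha$, against which every group's estimate is realigned, that forces this consistency and makes the off-by-one (when it occurs) cancel against $\alpha$ rather than propagate through the reconstruction.
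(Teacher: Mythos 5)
Your proof is correct and follows essentially the same route as the paper's: condition on the good event from Propositions \ref{p2} and \ref{p3} via a union bound, use the shared fractional estimate $\alpha$ to align the per-group integer parts, apply the Chinese Remainder Theorem, and bound the final error using $N \ge \pi/\epsilon$. If anything, your explicit argument that the integer offset $M_i - m_i$ equals the same nearest integer to $\{M\} - \alpha$ for every group is more careful than the paper's terse assertion that ``$\gamma$ is independent of $i$'' followed by an appeal to continuity of the CRT, which implicitly relies on exactly this cross-group consistency.
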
 
\begin{proof} 
Let $CRT(m_{1},\dots,m_{k})$ be the function in Definition \ref{def:crt} denote the unique integer $m \mod N$ such that $m =  m_{i} 
\mod N_{i} $. The Chinese remainder theorem shows that this function is invertible, specifically $CRT^{-1}(m) = (m \mod n_1, \dots, m \mod n_k)$. 
The $CRT$ function is continuous in the following sense $CRT^{-1}(m+a) = (m + a\mod n_1, \dots, m+a \mod n_k)$, or equivalently $CRT(m_1+a, m_2+a, \dots, m_k+a) = CRT(m_1,\dots, m_k)+a$ for $a \in \Z$. For $M = \lfloor M \rfloor + \{ M\}$, we have 
\al{ 
M= CRT (  \lfloor M \rfloor \mod N_{1}, \cdots, \lfloor M \rfloor \mod N_{K} ) + \{ M \} 
} 
The QoPrime algorithm instead outputs the reconstructed estimate, 
\al{ 
\overline{M}= CRT (  \lfloor \overline{M_{1}} + \beta_{1}  \rfloor , \cdots,   \lfloor \overline{M_{K}} + \beta_{i}  \rfloor) + \alpha
} 
Let us establish next the correctness of the sign resolution procedure. The reconstruction $\theta_{s}$ for the correct sign pattern is within distance $N_{1}/2N$ of the estimate $\theta'$ produced by Lemma \ref{lem:zero} with probability $1- 6e^{-2c}$. Lemma \ref{lemma:signs} establishes that no other sign pattern can have comparable accuracy except for the exceptional case when the $\theta$ is close to $\pi/4$ which is handled by the exact amplitude amplification technique in Lemma \ref{lemma:exactAA}. It follows that the sign resolution procedure in step 18 is correct works and it can be assumed for the analysis that the signs $t$ are correct.

By Lemma \ref{lem:one}  and the choice of $\beta_{j}$  in step 15 of the Algorithm it follows that  $| \overline{M_{j}} + \beta_{j}  - (\lfloor M \rfloor \mod N_{j} +  \{ M \}   )| = \gamma < 0.5 $ for all $j \in [k]$ where $\gamma$ is independent of $j$ with probability at least $1- 2ke^{-2c}$. The accuracy of the estimates implies that $| \lfloor \overline{M_{j}} + \beta_{j}  \rfloor - (\lfloor M \rfloor \mod N_{j}   )| \leq 1 $ for all $j \in [k]$. By continuity of the Chinese remainder theorem, the reconstruction error $|\overline{M} - M| \leq 1+ |\alpha - \{M\}| \leq 1.5$. The QoPrime algorithm therefore outputs an estimate $\frac{\pi \overline{M}  }{2N}$ that differs from the true value $\frac{\pi M  }{2N}$ by at most $\frac{\pi}{N} \leq \epsilon$.

\end{proof}

\subsection{Choosing the parameters}\label{sec33}

In this section, we further detail the choice of the parameters for the QoPrime algorithm. 
The small-$\epsilon$ asymptotics of the QoPrime algorithm rely on finding coprime moduli of similar magnitude and product of order $\Theta(\epsilon^{-1})$. To this effect, we formulate the following lemma:
\begin{lemma}
	\label{lemma:coprimes}
	\cite{ES71} Given a fixed integer $k\geq 2$ and $N \in \R$, we can find $k$ mutually coprime integers $n_1(N)< n_2(N)<\dots<n_k(N)$ such that $\lim_{N\to\infty}\frac{n_1(N)\dots n_k(N)}{N}=1$ and $\lim_{N\to\infty}\frac{n_k(N)}{n_1(N)}=1$.
\end{lemma}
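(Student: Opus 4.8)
The plan is to reduce the lemma to a statement about primes in a short interval, and then invoke the prime number theorem. Fix $k$ and set $m = \lfloor N^{1/k} \rfloor$, so that $N^{1/k} - 1 < m \le N^{1/k}$. I claim it suffices to produce $k$ pairwise coprime integers $n_1 < \cdots < n_k$ all lying in the window $[m,\, m + m^{0.9}]$. For such a choice the ratio condition is immediate, since $n_k/n_1 \le (m + m^{0.9})/m = 1 + m^{-0.1} \to 1$. For the product condition, bounding each factor below by $m$ and above by $m + m^{0.9} \le N^{1/k}(1 + N^{-0.1/k})$ gives
\[ \left(1 - N^{-1/k}\right)^k \le \frac{m^k}{N} \le \frac{n_1 \cdots n_k}{N} \le \left(1 + N^{-0.1/k}\right)^k, \]
and both outer quantities tend to $1$ as $N \to \infty$. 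Thus the whole lemma reduces to finding $k$ mutually coprime integers in a window whose relative width $m^{-0.1}$ shrinks to $0$.

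For the construction I would take the $n_i$ to be $k$ distinct primes in $[m,\, m + m^{0.9}]$. Distinct primes are automatically pairwise coprime, and for large $N$ every such prime exceeds $2$ and is therefore odd, which additionally matches the oddness requirement in Step 1 of Algorithm \ref{AE:crt}. It then only remains to guarantee that the window contains at least $k$ primes for all sufficiently large $N$; by a quantitative prime count the number of primes in $[m,\, m+m^{0.9}]$ is asymptotic to $m^{0.9}/\log m$, which exceeds any fixed $k$ once $N$ is large enough, and the reduction above finishes the argument.

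The one genuinely nontrivial input — and the step I expect to be the main obstacle — is justifying this prime count in an interval of length $m^{0.9}$, which is shorter than $m$ and so is not handled by the plain prime number theorem $\pi(x) \sim x/\log x$. Here I would appeal to a short-interval estimate such as Huxley's theorem, which yields $\pi(x + x^{\theta}) - \pi(x) \sim x^{\theta}/\log x$ for every fixed $\theta > 7/12$; since $0.9 > 7/12$ the exponent is comfortably within range and the asymptotic applies, giving far more than $k$ primes. If one prefers to avoid analytic input of this strength, the same window also admits an elementary sieve-and-greedy construction: after removing the multiples of every prime below a threshold of size $\Theta(\log m)$ chosen large enough, each greedily selected survivor has only large prime factors and therefore eliminates only a vanishing fraction of the remaining candidates, so $k$ pairwise coprime survivors can be extracted.
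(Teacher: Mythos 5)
Your argument is correct, but it is a genuinely different route from the paper's. The paper does not really prove Lemma \ref{lemma:coprimes}: it defers entirely to \cite{ES71}, asserting that the lemma ``follows from the sub-linear scaling of the number of coprimes that can fit inside an interval of a given size,'' and then describes the practical implementation (a precomputed table of $k$ adjacent odd pairwise-coprime integers near $\lfloor(\pi/\epsilon)^{1/k}\rfloor$, validated empirically in Figure 2). Your reduction step --- shrink everything to a window $[m,\,m+o(m)]$ around $m=\lfloor N^{1/k}\rfloor$, after which both limit statements follow from elementary inequalities --- is the same in spirit as what any \cite{ES71}-based argument must do, and your bounds $(1-N^{-1/k})^{k}\le n_1\cdots n_k/N\le(1+N^{-0.1/k})^{k}$ are verified correctly. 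Where you diverge is in how the window is populated: you take $k$ distinct primes, which trivializes pairwise coprimality and (for large $N$) oddness, at the cost of importing Huxley's short-interval prime count $\pi(x+x^{\theta})-\pi(x)\sim x^{\theta}/\log x$ for $\theta>7/12$. That is a heavier analytic input than the combinatorial/sieve statement about pairwise coprime integers in an interval that the paper leans on, but it yields a clean, fully explicit proof, which the paper never supplies. What the paper's route buys is a construction closer to what is actually implemented (adjacent coprimes, not primes); what yours buys is a self-contained correctness argument.

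Two minor points. First, Huxley is avoidable: since the lemma only needs the relative window width to vanish, you may take the window $[m,\,m(1+1/\log m)]$, for which the prime number theorem with a classical error term already gives $\sim m/\log^{2}m\gg k$ primes, so no result on intervals of length $x^{\theta}$ with $\theta<1$ is required. Second, the lemma asks for $n_i(N)$ for every real $N$, while your construction only works once $N$ is large enough for the window to contain $k$ primes; since both conclusions are limits as $N\to\infty$, defining $n_1(N),\dots,n_k(N)$ arbitrarily (say, as the first $k$ primes) for small $N$ closes this gap, but it is worth one sentence.
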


This lemma follows from the sub-linear scaling of the number of coprimes that can fit inside an interval of a given size, as studied in \cite{ES71}. In practice, we pre-compute a table of $k$ adjacent odd coprimes starting at each odd integer, stopping at large enough values of $k$ and $n_1$ according to the target precision $\epsilon$. It suffices to build the table for $k \leq 12$ and $n_1\approx 10^5$ to achieve approximation error $\epsilon=10^{-10}$, both with and without noise. Given the table, target precision $\epsilon$ and an integer $k\geq 2$, the implementation chooses $k$ adjacent coprimes from the table starting at $\lfloor 1/\epsilon^{1/k} \rfloor$ with product closest to $\frac{\pi}{\epsilon}$ in absolute value. Figure 1 
compares the table used in practice to the theoretical guarantees in Lemma \ref{lemma:coprimes}.\\

\noindent We can summarize the asymptotics of the algorithm in the following table:
\begin{equation}
    \label{eq:noiselessalgo}
    \begin{array}{lll}
    	\text{parameters:} & \epsilon , k, q & \\
    	\text{coprimes:} & n_1,\dots, n_k &=\Theta(1/\epsilon^{1/k}) \\
	    \text{sampling depth:} & \text{max}_{i}\frac{N}{N_i} &= \Theta(1/\epsilon^{1-q/k})\\
    \text{oracle calls:} & O\left(\sum_{i=1}^{\lceil k/q \rceil} N_i^2 \times \frac{N}{N_i}\right) &= O\left(\left\lceil \frac{k}{q}\right\rceil 1/\epsilon^{1+q/k}\right)
    \end{array}
\end{equation}

\begin{figure}[H]
	\centering
	\includegraphics[scale=0.33]{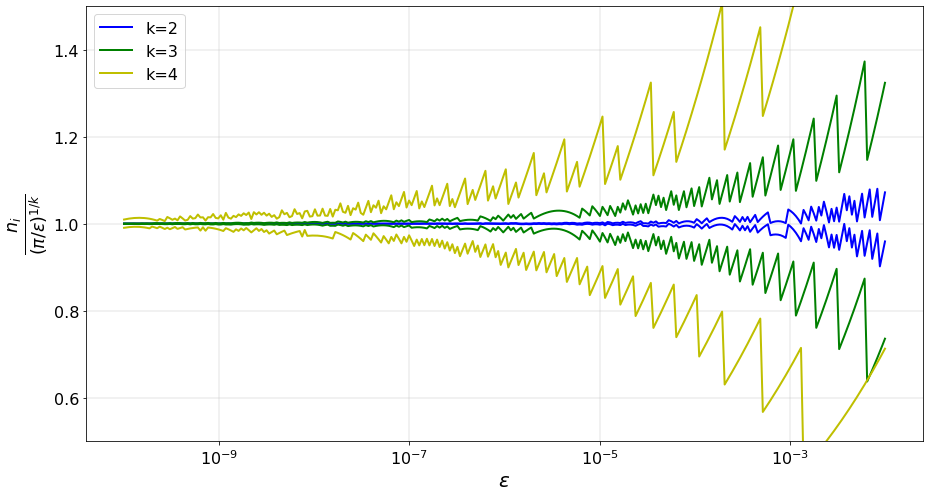}
	\caption{Convergence of the coprime finding routine. The algorithm finds $k$ adjacent coprimes $n_1$, $n_2$, \dots, $n_k$ such that their product $N=n_1\dots n_k$ is close to $\pi/\epsilon$. 
	Both the smallest coprime $n_1$ (approaching 1 from below) and the largest coprime $n_k$ (approaching 1 from above) are shown to converge to $(\pi/\epsilon)^{1/k}$ matching the 
	convergence in Lemma \ref{lemma:coprimes} for several values of $k$. }
\end{figure}\label{fig:coprimeconvergence}

Note that a bound onthe number of total classical sampling sessions used in the algorithm can be obtained easily using our geometric schedule described in Section 3.2. Since for every recursive call the ratio $q/k$ is divided by two, we have that for the $n$th recursive step there will be $\lceil k / 2^n q\rceil$ sampling sessions. Therefore the total number of sampling sessions can be bounded by the simple geometric sum $\lceil k / q\rceil + \lceil k / 2 q\rceil + \lceil k / 2^2 q\rceil + \dots \leq 2\lceil k/q\rceil$. While simple, this fact is important because it tells us that the complexity of the algorithm can be described only by the first level of the recursion, up to a small order-one factor. Therefore, in choosing the algorithm's parameters such as $k$ and $q$, we will only optimize over the behavior of the first level of the recursion. This analysis means that, for a given target precision $\epsilon$ and overall failure probability $\delta$, the total number of oracle calls can be bounded by:
\begin{equation}
	\label{eq:noiselessprefactor}
	\mathcal{N}(k, q, \epsilon, \delta) \leq  C \times \left\lceil\frac{k}{q}\right\rceil\times \frac{1}{\epsilon^{1+q/k}}\times \log\left(\frac{4}{\delta}\left\lceil\frac{k}{q}\right\rceil\right)
\end{equation}
The constant $C$ which tightens this bound does not depend on the parameters of the algorithm, in fact experimental results provide evidence that $C$ is a small constant and that in practice, we can expect $C < 10$. (see Figure \ref{fig:prefactor}). 

In practice, given target precision $\epsilon$ and accepted failure probability $\delta$, we can find optimal values of $k$ and $q$ such that the oracle calls in (\ref{eq:noiselessprefactor}) is minimized. In this noiseless scenario, it can be shown that the optimal parameter $q$ is always 1 (i.e. it is best to access the largest allowed depth). Minimizing the oracle call number in (\ref{eq:noiselessprefactor}) by choosing $k$ leads to (ignoring the subleading contribution from the failure probability $\delta$):
\begin{equation}
	k^*(\epsilon)\approx \log\frac{1}{\epsilon}
\end{equation}
Plugging this back into (\ref{eq:noiselessprefactor}) leads to an asymptotic dependency of oracle calls of $1/\epsilon$, up to logarithmic factors, as expected in the quantum regime:
\begin{equation}
	\label{eq:noiselesslimit}
	\min_{(q,k)} \mathcal{N} = O\left(\epsilon^{-1}\log\epsilon^{-1}\log\left(\frac{4 \log\epsilon^{-1}}{\delta}\right)\right)
\end{equation}


\subsection{QoPrime algorithm with noise} 

In this section, we study the performance of the QoPrime AE algorithm under the same depolarizing noise model as we studied for the power law AE algorithm. Inverting the noisy probabilistic model (\ref{eq:depoldef}), the classical inference problem in the algorithm, when sampling at depth $N/N_i$, becomes:
\begin{equation}
		\overline{M_j} \equiv \pm \frac{2N_i}{\pi}\arccos\sqrt{\frac{1}{2}+e^{\gamma N/N_i}\left(\hat{p}-\frac{1}{2}\right)}\mod N_i
\end{equation}
As before, we can use this relation to translate a confidence interval on the coin toss probability $\hat p$, computed by classical postprocessing of measurement samples, to a confidence interval on $\overline{M_j}$. The difference in the noisy case is the exponential stretch factor of $e^{\gamma N/N_i}$ enhancing the angle confidence interval. Since a classical confidence interval shrinks as the square root of the number of samples, the required number of samples will pick up a factor of  $e^{2 \gamma N/N_i}$ under this noise model in order to guarantee the noiseless confidence intervals. Similar to the noiseless algorithm (\ref{eq:noiselessalgo}), we can therefore summarize the asymptotics of the noisy algorithm as follows:

\begin{equation}
    \label{eq:noisyalgo}
    \begin{array}{lll}
    \text{parameters:} & \epsilon , k, q, \gamma & \\
    	\text{coprimes:} & n_1,\dots, n_k &=\Theta(1/\epsilon^{1/k}) \\
	    \text{sampling depth:} & \text{max}_{i}\frac{N}{N_i}  &= \Theta(1/\epsilon^{1-q/k})\\
    \text{oracle calls:} & \Theta\left(\sum_{i=1}^{\lceil k/q \rceil} N_i^2 e^{2\gamma N/N_i}\frac{N}{N_i}\right) &= \Theta\left(\left\lceil \frac{k}{q}\right\rceil \frac{1}{\epsilon^{1+q/k}}e^{2\gamma (\frac{\pi}{2\epsilon})^{1-q/k}}\right)
    \end{array}
\end{equation}
Therefore, for a given target precision $\epsilon$, depolarizing noise level $\gamma$, and overall failure probability $\delta$, the total number of oracle calls scales as:
\begin{equation}
	\label{eq:prefactor}
	\mathcal{N}(k, q, \gamma, \epsilon, \delta) \leq C \times \left\lceil\frac{k}{q}\right\rceil\times \frac{1}{\epsilon^{1+q/k}} \times\exp\left(2\gamma\left(\frac{\pi}{2\epsilon}\right)^{1-q/k}\right)\times \log\left(\frac{4}{\delta}\left\lceil\frac{k}{q}\right\rceil\right)
\end{equation}
where $C$ is the same constant overhead as in equation (\ref{eq:noiselessprefactor}). Given target precision $\epsilon$, noise level $\gamma$, and accepted failure probability $\delta$, we can find optimal values of $k$ and $q$ such that the number of oracle calls in (\ref{eq:prefactor}) is minimized. See also Figure 2 for the behaviour of the number of oracle calls for different values of $k$ and $q$.
\begin{figure}[H]
\label{fig:envelope}
\centering
\includegraphics[scale=0.33]{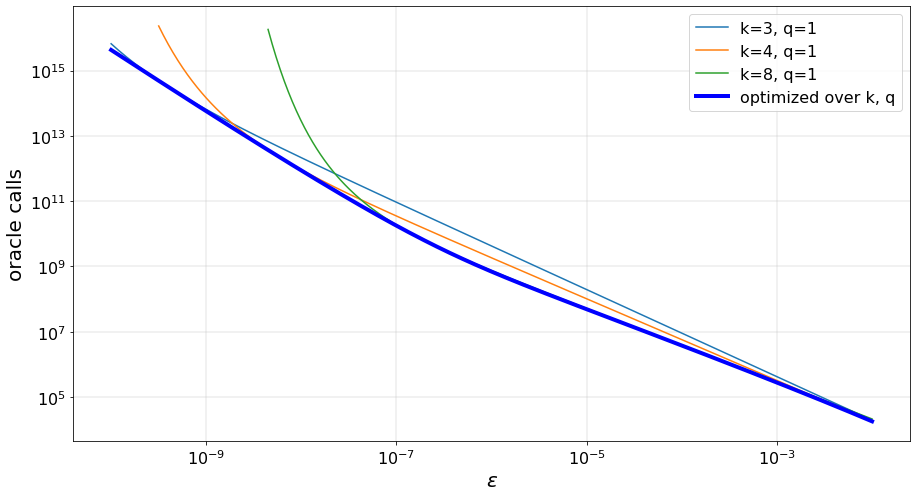}
\caption{The behavior of the oracle call dependency in (\ref{eq:prefactor}) for several values of parameters $k$ and $q$, and for fixed noise level $\gamma=10^{-5}$ and probability of failure $\delta=10^{-5}$. When taking the minimum over the family of curves parametrized by all valid $k$ and $q$ (here assumed continuous for simplicity), we obtain the envelope of the optimal QoPrime algorithm (thick blue line). This emergent behavior smoothly interpolates between a classical $1/\epsilon^{2}$ scaling in the noise-dominated region $\epsilon \ll \gamma$ and a quantum scaling $1/\epsilon$ in the coherent region $\epsilon \gg \gamma$.}
\end{figure}

Optimizing over $k$ and $q$ in this manner is the step which ensures that the effective scaling of oracle calls as a function of $\epsilon$ is always between the classical scaling of $1/\epsilon^{2}$ and the quantum scaling $1/\epsilon$ (see Figure 3). Specifically, this can be formulated as a bound on the instantaneous exponent: 
\begin{equation}
	-2 \leq \lim_{\epsilon \to 0}\frac{d \inf_{k,q}\log \mathcal{N}(\epsilon, \gamma, \delta, k, q)}{d\log\epsilon} \leq -1
\end{equation}

\begin{figure}[H]
\label{fig:exponent}
\centering
\includegraphics[scale=0.33]{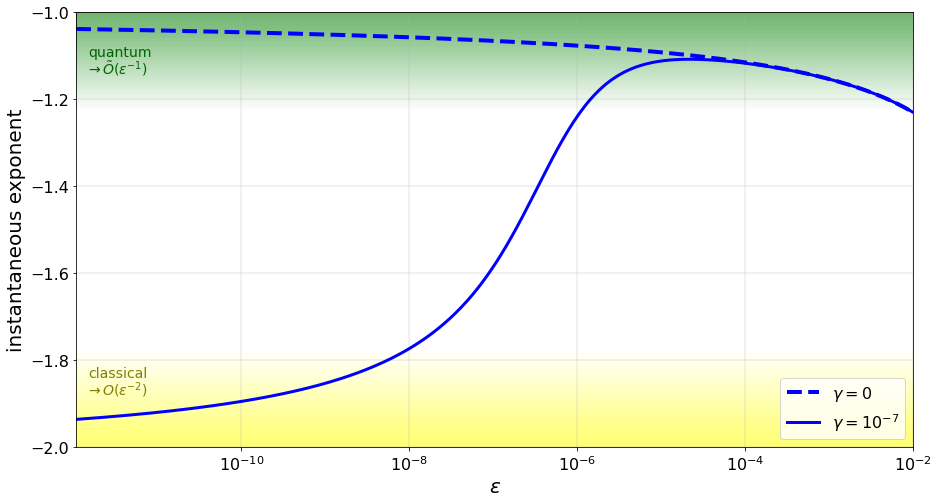}
\caption{The transition from coherent to noise-dominated as measured by the instantaneous exponent $\frac{d\log \mathcal{N}}{d\log \epsilon}$, where $\mathcal{N}$ is the number oracle calls optimized over parameters $k$ and $q$.}
\end{figure}

It can be shown that in the noise-dominated limit, the optimal parameter $q$ tends to its upper bound $q=k-1$ (corresponding to the shallowest accessible circuits). Using this observation, we can analytically study the optimization over $k$ using the dependency in (\ref{eq:prefactor}) and obtain that the optimal $k$ parameter will have the form (in a continuous approximation):
\begin{equation}
	k^*(\epsilon, \gamma) \approx \frac{\log\frac{\pi}{2\epsilon}}{\log\frac{\log\frac{1}{\epsilon}}{2\gamma\log\frac{\pi}{2\epsilon}}}
\end{equation}
This allows us to study the asymptotic dependency of oracle calls on the target precision $\epsilon$ analytically; extracting the low-$\epsilon$ limit yields:
\begin{equation}
	\label{eq:noisylimit}
	\lim_{\gamma \gg \epsilon} \mathcal{N}(\epsilon, \gamma, \delta) \leq 4Ce\,\gamma\, \epsilon^{-2} \log\left(\frac{4}{\delta}\right)\,,
\end{equation}
where $C$ is the constant prefactor introduced in (\ref{eq:prefactor}). This classical-limit curve can be used to compare the asymptotic runtime of our algorithm to classical Monte Carlo techniques.
\\\\
Outside of the two noise limits, specifically noise-dominated (\ref{eq:noisylimit}) and noiseless (\ref{eq:noiselesslimit}), the optimal parameters $k$ and $q$ depend non-trivially on the problem, and they can be found numerically. Example $(k, q)$ optimal trajectories obtained by optimizing (\ref{eq:prefactor}) are shown in Figure 4.
\begin{figure}[H]
\label{fig:parameter_trajectory}
\centering
\includegraphics[scale=0.33]{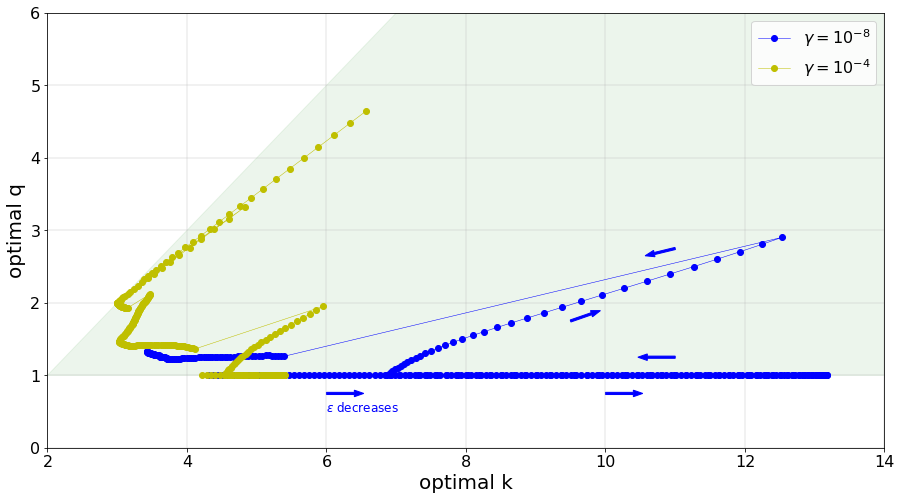}
\caption{The trajectory of optimal $k$, $q$ parameters chosen by minimizing the asymptotic dependency in (\ref{eq:prefactor}), for two different noise levels. The optimization is over continuous $k$, $q$ for simplicity. The green region marks the valid parameter region $k\geq 2$, $1 \leq q \leq k-1$. The arrow shows the direction of the optimal parameters as the target precision $\epsilon$ is being lowered from $\epsilon=10^{-3}$ to $\epsilon=10^{-10}$. While $\epsilon \ll \gamma$ (i.e. noiseless regime), we have that $q=1$.}
\end{figure}


\section{Empirical results}\label{empirics}
In this section, we present empirical results for the power law and the QoPrime AE algorithms and compare them with state of the art amplitude estimation 
algorithms \cite{grinko2019iterative}. The experimental results validate the theoretical analysis and provide further insight in the behaviour of these low depth algorithms in noisy regimes. 

\subsection{The power law AE}

Figure 5 compares the theoretical and empirically observed scaling for the number of oracle calls $N$ as a function of the error rate $\epsilon$ in the power law AE algorithm in the absence of noise, i.e. $\gamma = 0$. We numerically simulated the power law AE algorithm for randomly chosen $\theta  \in [0, \pi/2]$ and with $m_k=\lfloor k^{\frac{1-\beta}{2\beta}} \rfloor$ for fixed values of parameter $\beta \in \{0.455, 0.714 \}$, which make the exponent be $\{0.2,0.6\}$ respectively. We also provide the extremal cases of $\beta \in \{0,1\}$. The experimental results agree closely with the predictions of the theoretical analysis. 

Figure 6 shows the scaling of the power law AE algorithm under several noise levels. Here, the parameter $\beta$ is chosen adaptively, based on the error rate $\epsilon$ and noise level $\gamma$. For target errors below the noise level, we can use the exponential schedule to get the optimal quantum scaling. After this threshold, we use the power law schedules with exponents chosen as in Proposition \ref{prop:pwl_exponent_noise}. The result is that for these smaller target errors, the scaling is in between the optimal quantum and the classical scaling.

\begin{figure} [H] \label{f7} 
\centering
\includegraphics[scale = 0.28]{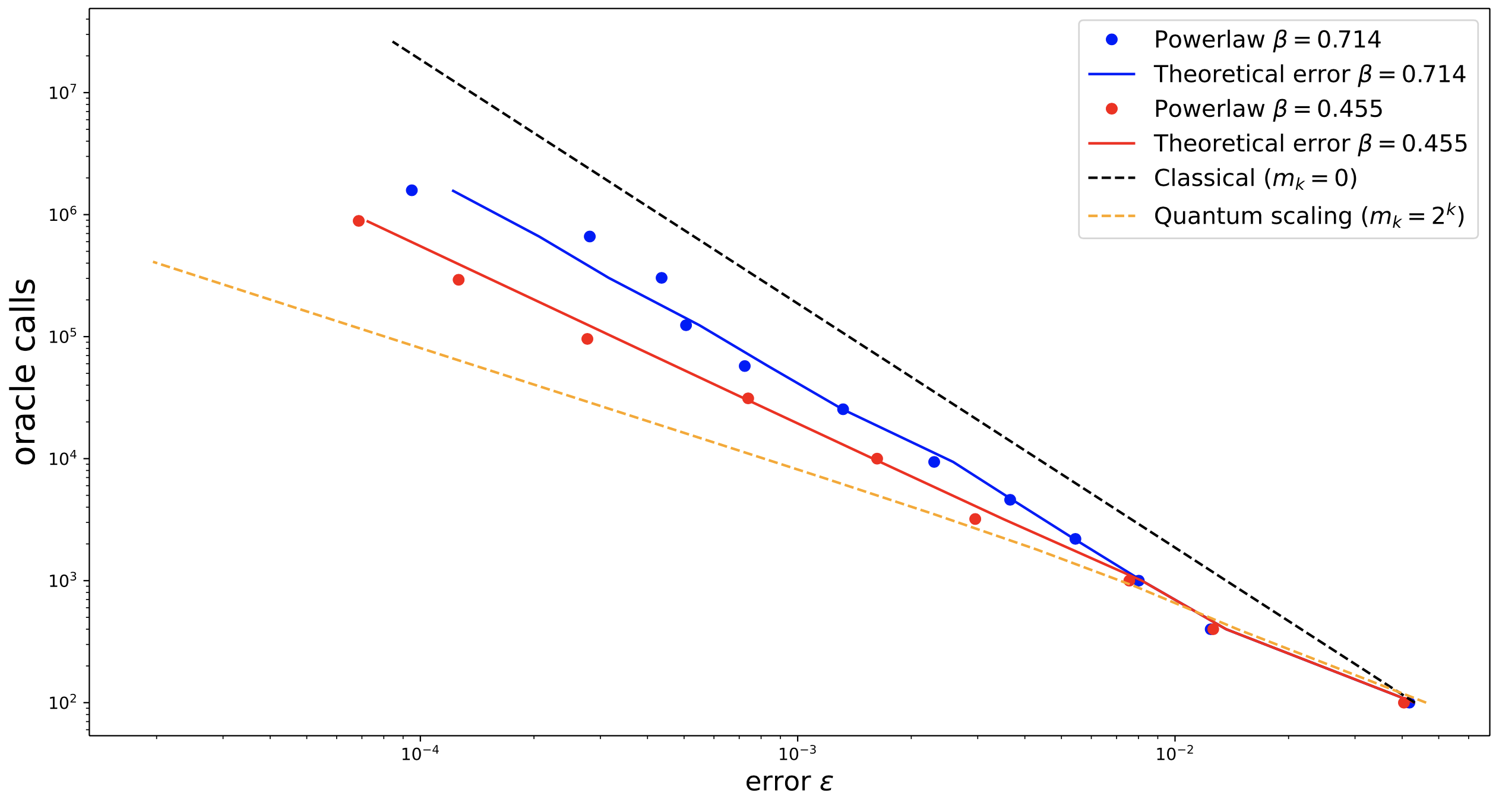}
\caption{Performance of the power law AE algorithm in theory (solid) and practice (dots) using the schedule $m_k=\lfloor k^{\frac{1-\beta}{2\beta}} \rfloor$ for values $\beta=0.455$ (red) and $\beta=0.714$ (blue), for different error rates $\epsilon$. The true value is $\theta^*$ is chosen at random. We took the number of shots $N_{shot}=100$. Applying linear regression to these experimental data points, gives slopes $-1.718$ and $-1.469$, whereas the theoretical slopes are $-1.714$ and $-1.455$ for the blue and red points respectively. }
\end{figure}

\begin{figure}[H]
\centering
\includegraphics[scale=0.28]{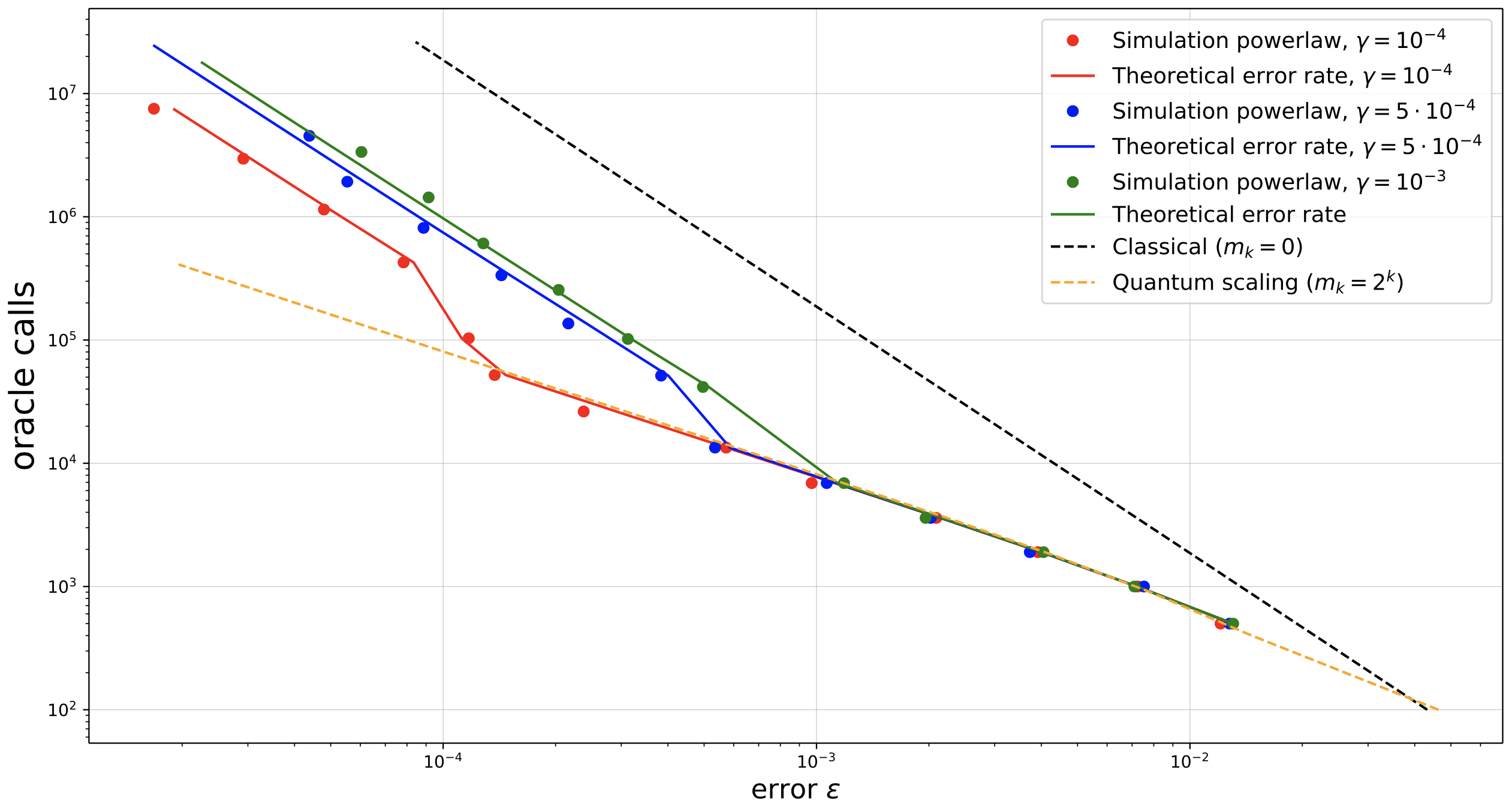}
\caption{Performance of the power law AE algorithm in theory (solid) and practice (dots) using power law schedules where the parameter $\beta$ is optimized for given $\epsilon$ and $\gamma$. Also the classical and quantum scalings are plotted for comparison. For small target errors, we obtain the optimal quantum scaling, while for smaller target errors we use power law exponents using Proposition \ref{prop:pwl_exponent_noise}. The result is that the scaling approaches the classical scaling as the target error goes to 0.}
\end{figure}

\subsection{The QoPrime algorithm}

The parameters $k$ and $q$ for the QoPrime algorithm are chosen by optimizing over the Chernoff upper-bounds obtained in Lemma \ref{lem:one} and described in (\ref{eq:noisyalgo}). Figure 7 shows the theoretical upper bounds and the empirically observed number of oracle calls as a function of the accuracy $\epsilon$ for different noise rates. The algorithm in practice performs better than the theoretical bounds as it computes the confidence intervals using exact binomial distributions as opposed to the Chernoff bounds in the theoretical analysis.

Figure 8 plots the maximum oracle depth as a function of the target precision $\epsilon$ for the QoPrime algorithm in noiseless and noisy settings, as well as for the IQAE algorithm. Finally, Figure 9 provides empirical estimates for the constant factor $C$ for the QoPrime algorithm in noisy settings. The observed value of $C$ is a small constant and the simulations show that $C <10$ over a wide range of $\epsilon$ and noise rates that cover most settings of interest.


\begin{figure}[H] \label{f8} 
\centering
\includegraphics[scale=0.33]{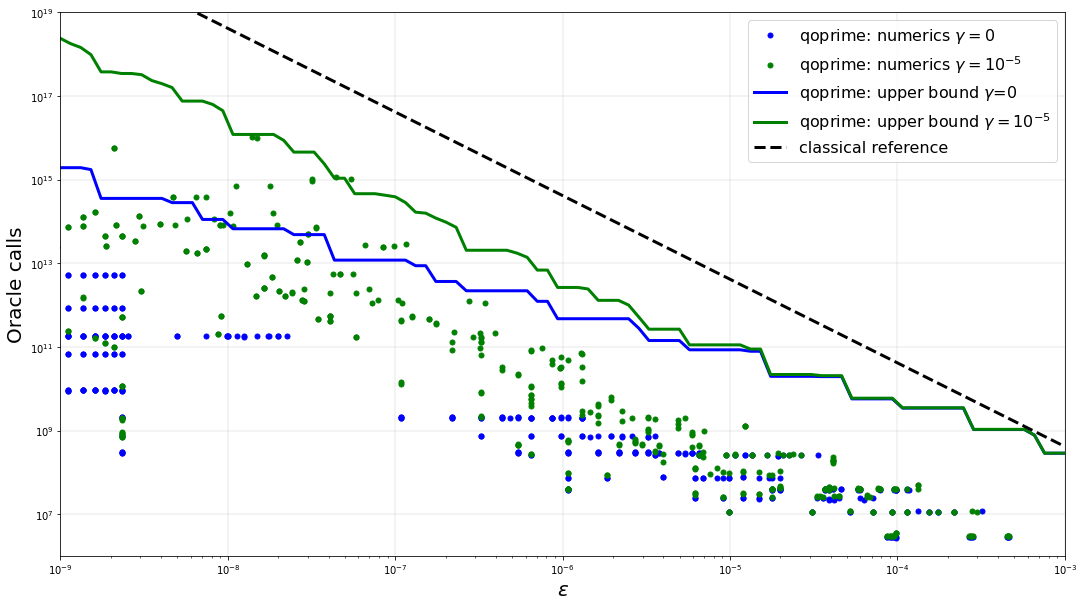}
\caption{Performance of the QoPrime algorithm under two noise levels, and across various choices for the true angle $\theta$. Shown are both theoretical upper bounds (solid) and exact simulated oracle calls (dots) for two scenarios: noiseless (blue), and depolarizing rate $\gamma=10^{-5}$ (green). For each target precision, 20 values of the true angle spanning the $[0, \pi/2]$ interval have been selected to generate the samples; the horizontal axis represents realized approximation precision. The classical Monte Carlo curve (black) is obtained by assuming noiseless classical sampling from a constant oracle depth of 1. We see the curve follow a quantum $\epsilon^{-1}$ scaling for small errors $\epsilon \gg \gamma$, which transitions into a classical $\epsilon^{-2}$ dependency when the precision is much smaller than the noise level ($\epsilon \ll \gamma$).}
\end{figure}

\begin{figure}[H] \label{f9} 
\centering
\includegraphics[scale=0.425]{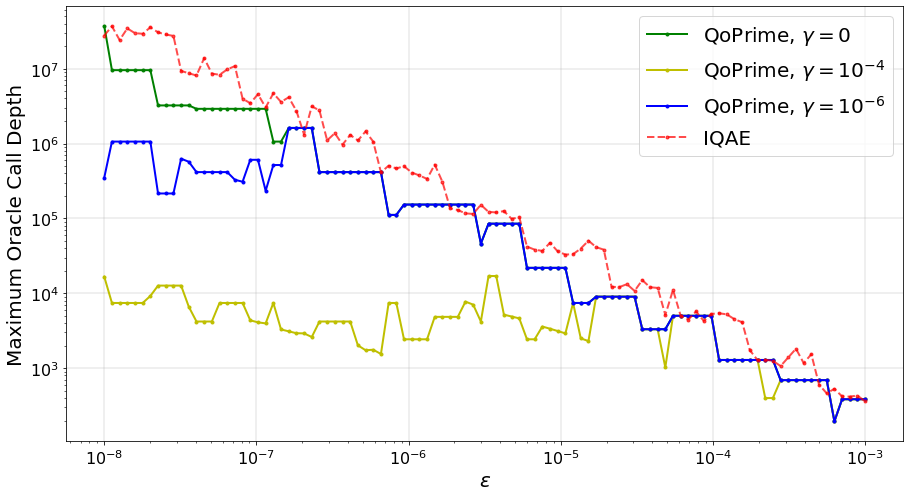}
\caption{Maximum oracle depth as a function of the target precision $\epsilon$. The noiseless QoPrime algorithm and the IQAE algorithm in \cite{grinko2019iterative} both have a similar scaling of depth as $O(\epsilon^{-1})$. However, introducing a depolarizing noise level $\gamma$ provides a bound for the depth required by the QoPrime algorithm. Specifically, the QoPrime algorithm will not access depths higher than the noise scale $\gamma$, which would correspond to exponentially suppressed confidence intervals, and require exponentially more samples.}
\end{figure}

\begin{figure}[H]

\centering
\includegraphics[scale=0.30]{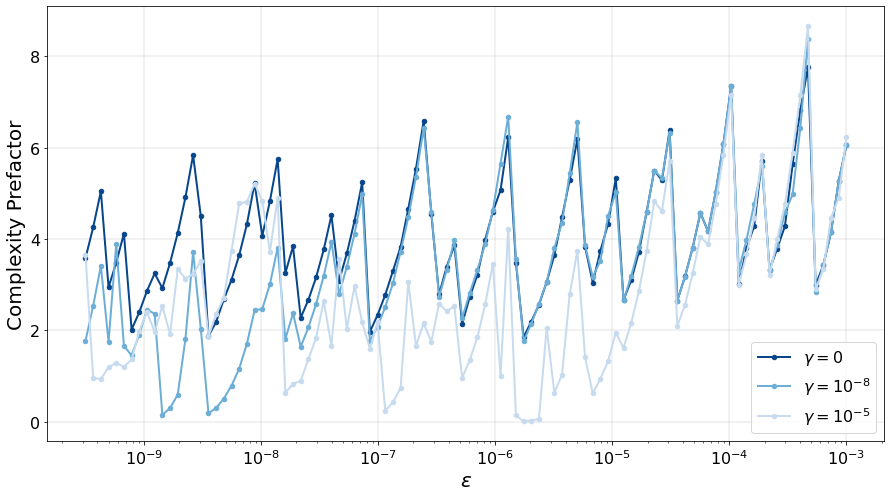}
\vspace{-10pt}
\caption{Empirical values of the constant prefactor $C$, as defined in (\ref{eq:prefactor}) above, on the target precision $\epsilon$ for an arbitrary value of the true angle $\theta$ and failure probability $\delta=10^{-5}$.} \label{fig:prefactor} 
\end{figure}

\subsection{Benchmarking}

Last we compare the performance of the Power law and QoPrime AE algorithms against the state of the art amplitude estimation algorithm IQAE  \cite{grinko2019iterative}. 

Figure 11 plots the performance of the Power Law, the QoPrime and the IQAE in noisy settings, where performance is measured by the number of oracle calls for target accuracy $\epsilon$. The plot emphasizes the advantage of the Power law and the QoPrime over algorithms such as IQAE, which require access to a full circuit depth of $O(1/\epsilon)$. In this scenario, this large depth is exponentially penalized by the depolarizing noise by requiring an exponentially large number of classical samples to achieve a precision below the noise level. In comparison, the power law and the QoPrime AE algorithms transition smoothly to a classical estimation scaling and do not suffer from an exponential growth in oracle calls. The Power law AE algorithm has the best practical performance according to the simulations.

\begin{figure}[H] \label{f10} 
\centering
\includegraphics[scale=0.28]{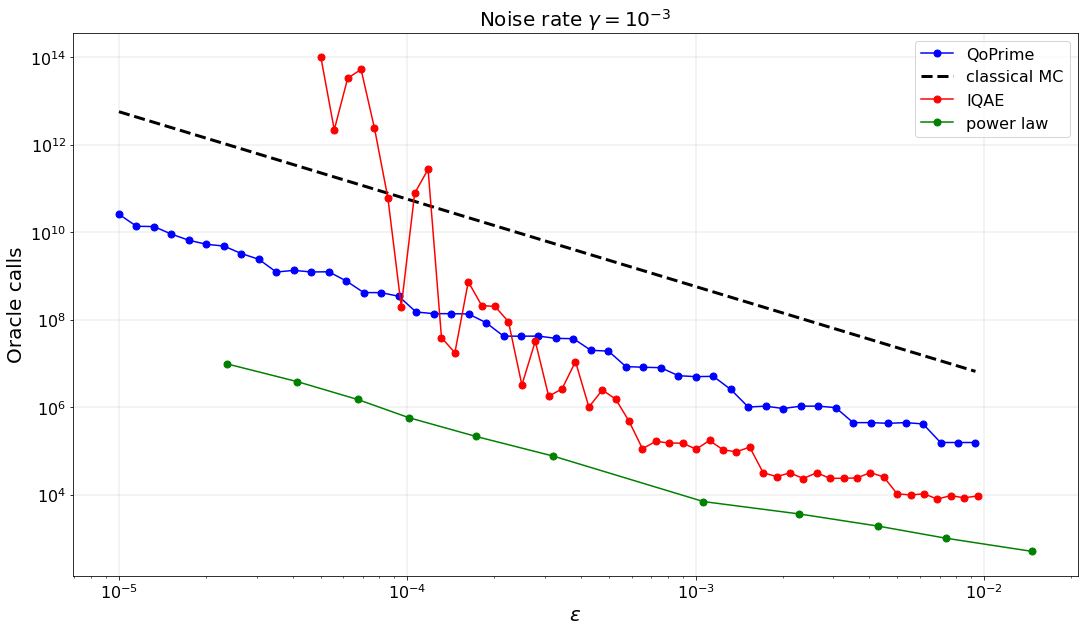}
\vspace{-10pt}
\caption{Comparison of the two algorithms introduced in this work (Power law and QoPrime) against the Iterative Quantum Amplitude Estimation algorithm (IQAE) introduced in \cite{grinko2019iterative}. A noise level of $\gamma = 10^{-3}$ is used for all three.}
\end{figure}

\textbf{Acknowledgements:} This work is a collaboration between Goldman Sachs and QCWare, it was carried out during TGT and FL's internships at Goldman Sachs. We acknowledge helpful discussions with Adam Bouland, Nikitas Stamatopolous, Rajiv Krishnakumar, and Paul Burchard. We thank an anonymous reviewer for helpful comments and for pointing out errors in a previous version of the QoPrime algorithm.


\bibliographystyle{IEEEtranS} 
\bibliography{bibliography}

\begin{thebibliography}{10}
\providecommand{\url}[1]{#1}
\csname url@samestyle\endcsname
\providecommand{\newblock}{\relax}
\providecommand{\bibinfo}[2]{#2}
\providecommand{\BIBentrySTDinterwordspacing}{\spaceskip=0pt\relax}
\providecommand{\BIBentryALTinterwordstretchfactor}{4}
\providecommand{\BIBentryALTinterwordspacing}{\spaceskip=\fontdimen2\font plus
\BIBentryALTinterwordstretchfactor\fontdimen3\font minus
  \fontdimen4\font\relax}
\providecommand{\BIBforeignlanguage}[2]{{%
\expandafter\ifx\csname l@#1\endcsname\relax
\typeout{** WARNING: IEEEtranS.bst: No hyphenation pattern has been}%
\typeout{** loaded for the language `#1'. Using the pattern for}%
\typeout{** the default language instead.}%
\else
\language=\csname l@#1\endcsname
\fi
#2}}
\providecommand{\BIBdecl}{\relax}
\BIBdecl

\bibitem{AR20}
\BIBentryALTinterwordspacing
S.~Aaronson and P.~Rall, ``Quantum approximate counting, simplified,'' in
  \emph{Symposium on Simplicity in Algorithms}.\hskip 1em plus 0.5em minus
  0.4em\relax SIAM, 2020, pp. 24--32. [Online]. Available:
  \url{https://dx.doi.org/10.1137/1.9781611976014.5}
\BIBentrySTDinterwordspacing

\bibitem{A99}
D.~S. Abrams and C.~P. Williams, ``Fast quantum algorithms for numerical
  integrals and stochastic processes,'' \emph{arXiv:quant-ph/9908083}, 1999.

\bibitem{A12}
\BIBentryALTinterwordspacing
A.~Ambainis, ``Variable time amplitude amplification and quantum algorithms for
  linear algebra problems,'' in \emph{STACS'12 (29th Symposium on Theoretical
  Aspects of Computer Science)}, vol.~14.\hskip 1em plus 0.5em minus
  0.4em\relax LIPIcs, 2012, pp. 636--647. [Online]. Available:
  \url{https://dx.doi.org/10.4230/LIPIcs.STACS.2012.636}
\BIBentrySTDinterwordspacing

\bibitem{bouland2020prospects}
A.~Bouland, W.~van Dam, H.~Joorati, I.~Kerenidis, and A.~Prakash, ``Prospects
  and challenges of quantum finance,'' \emph{arXiv preprint arXiv:2011.06492},
  2020.

\bibitem{B11}
G.~Brassard, F.~Dupuis, S.~Gambs, and A.~Tapp, ``An optimal quantum algorithm
  to approximate the mean and its application for approximating the median of a
  set of points over an arbitrary distance,'' \emph{arXiv:1106.4267}, 2011.

\bibitem{BHMT02}
\BIBentryALTinterwordspacing
G.~Brassard, P.~Hoyer, M.~Mosca, and A.~Tapp, ``Quantum amplitude amplification
  and estimation,'' \emph{Contemporary Mathematics}, vol. 305, pp. 53--74,
  2002. [Online]. Available: \url{https://dx.doi.org/10.1090/conm/305/05215}
\BIBentrySTDinterwordspacing

\bibitem{BHT98}
\BIBentryALTinterwordspacing
G.~Brassard, P.~H{\o}yer, and A.~Tapp, ``Quantum counting,'' in
  \emph{International Colloquium on Automata, Languages, and
  Programming}.\hskip 1em plus 0.5em minus 0.4em\relax Springer, 1998, pp.
  820--831. [Online]. Available: \url{https://dx.doi.org/10.1007/BFb0055105}
\BIBentrySTDinterwordspacing

\bibitem{B19}
P.~Burchard, ``Lower bounds for parallel quantum counting,'' \emph{arXiv
  preprint arXiv:1910.04555}, 2019.

\bibitem{ES71}
P.~Erd\"{o}s and J.~L. Selfridge, ``Complete prime subsets of consecutive
  integers,'' \emph{Proceedings of the Manitoba Conference on Numerical
  Mathematics, Winnipeg}, p.~13, 1971.

\bibitem{ferrie2013best}
\BIBentryALTinterwordspacing
C.~Ferrie, C.~E. Granade, and D.~G. Cory, ``How to best sample a periodic
  probability distribution, or on the accuracy of hamiltonian finding
  strategies,'' \emph{Quantum Information Processing}, vol.~12, no.~1, pp.
  611--623, 2013. [Online]. Available:
  \url{https://dx.doi.org/10.1007/s11128-012-0407-6}
\BIBentrySTDinterwordspacing

\bibitem{grinko2019iterative}
\BIBentryALTinterwordspacing
D.~Grinko, J.~Gacon, C.~Zoufal, and S.~Woerner, ``Iterative quantum amplitude
  estimation,'' \emph{arXiv preprint arXiv:1912.05559}, 2019. [Online].
  Available: \url{https://dx.doi.org/10.1038/s41534-021-00379-1}
\BIBentrySTDinterwordspacing

\bibitem{G98}
\BIBentryALTinterwordspacing
L.~K. Grover, ``A framework for fast quantum mechanical algorithms,'' in
  \emph{Proceedings of the thirtieth annual ACM symposium on Theory of
  computing}, 1998, pp. 53--62. [Online]. Available:
  \url{https://dx.doi.org/10.1145/276698.276712}
\BIBentrySTDinterwordspacing

\bibitem{H18}
Y.~Hamoudi and F.~Magniez, ``Quantum {C}hebyshev's inequality and
  applications,'' in \emph{46th International Colloquium on Automata,
  Languages, and Programming (ICALP 2019)}.\hskip 1em plus 0.5em minus
  0.4em\relax Schloss Dagstuhl-Leibniz-Zentrum fuer Informatik, 2019.

\bibitem{HHL09}
\BIBentryALTinterwordspacing
A.~W. Harrow, A.~Hassidim, and S.~Lloyd, ``Quantum algorithm for linear systems
  of equations,'' \emph{Physical review letters}, vol. 103, no.~15, p. 150502,
  2009. [Online]. Available:
  \url{https://dx.doi.org/10.1007/978-3-642-27848-8_771-1}
\BIBentrySTDinterwordspacing

\bibitem{HM73}
C.~Hipp and R.~Michel, ``On the {B}ernstein-v. {M}ises approximation of
  posterior distributions,'' \emph{The Annals of Statistics}, pp. 972--980,
  1976.

\bibitem{H94}
\BIBentryALTinterwordspacing
W.~Hoeffding, ``Probability inequalities for sums of bounded random
  variables,'' in \emph{The collected works of Wassily Hoeffding}.\hskip 1em
  plus 0.5em minus 0.4em\relax Springer, 1994, pp. 409--426. [Online].
  Available: \url{https://doi.org/10.2307/2282952}
\BIBentrySTDinterwordspacing

\bibitem{J17}
\BIBentryALTinterwordspacing
S.~Jeffery, F.~Magniez, and R.~De~Wolf, ``Optimal parallel quantum query
  algorithms,'' \emph{Algorithmica}, vol.~79, no.~2, pp. 509--529, 2017.
  [Online]. Available: \url{https://dx.doi.org/10.1007/s00453-016-0206-z}
\BIBentrySTDinterwordspacing

\bibitem{KLLP18}
I.~Kerenidis, J.~Landman, A.~Luongo, and A.~Prakash, ``q-means: A quantum
  algorithm for unsupervised machine learning,'' \emph{Proceedings of Neural
  Information Processing Systems (NeurIPS)}, 2019.

\bibitem{K95}
A.~Y. Kitaev, ``Quantum measurements and the abelian stabilizer problem,''
  \emph{arXiv preprint quant-ph/9511026}, 1995.

\bibitem{koh2020framework}
D.~E. Koh, G.~Wang, P.~D. Johnson, and Y.~Cao, ``A framework for engineering
  quantum likelihood functions for expectation estimation,'' \emph{arXiv
  preprint arXiv:2006.09349}, 2020.

\bibitem{L18}
\BIBentryALTinterwordspacing
T.~Li and X.~Wu, ``Quantum query complexity of entropy estimation,'' \emph{IEEE
  Transactions on Information Theory}, vol.~65, no.~5, pp. 2899--2921, 2018.
  [Online]. Available: \url{https://dx.doi.org/10.1109/TIT.2018.2883306}
\BIBentrySTDinterwordspacing

\bibitem{M15}
\BIBentryALTinterwordspacing
A.~Montanaro, ``Quantum speedup of {M}onte {C}arlo methods,'' \emph{Proceedings
  of the Royal Society A: Mathematical, Physical and Engineering Sciences},
  vol. 471, no. 2181, p. 20150301, 2015. [Online]. Available:
  \url{https://dx.doi.org/10.1098/rspa.2015.0301}
\BIBentrySTDinterwordspacing

\bibitem{P18}
\BIBentryALTinterwordspacing
J.~Preskill, ``Quantum computing in the {NISQ} era and beyond,''
  \emph{Quantum}, vol.~2, p.~79, 2018. [Online]. Available:
  \url{https://dx.doi.org/10.22331/q-2018-08-06-79}
\BIBentrySTDinterwordspacing

\bibitem{S20}
\BIBentryALTinterwordspacing
Y.~Suzuki, S.~Uno, R.~Raymond, T.~Tanaka, T.~Onodera, and N.~Yamamoto,
  ``Amplitude estimation without phase estimation,'' \emph{Quantum Information
  Processing}, vol.~19, no.~2, p.~75, 2020. [Online]. Available:
  \url{https://dx.doi.org/10.1007/s11128-019-2565-2}
\BIBentrySTDinterwordspacing

\bibitem{T20}
\BIBentryALTinterwordspacing
T.~Tanaka, Y.~Suzuki, S.~Uno, R.~Raymond, T.~Onodera, and N.~Yamamoto,
  ``Amplitude estimation via maximum likelihood on noisy quantum computer,''
  \emph{arXiv preprint arXiv:2006.16223}, 2020. [Online]. Available:
  \url{https://dx.doi.org/10.1007/s11128-021-03215-9}
\BIBentrySTDinterwordspacing

\bibitem{wang2019accelerated}
\BIBentryALTinterwordspacing
D.~Wang, O.~Higgott, and S.~Brierley, ``Accelerated variational quantum
  eigensolver,'' \emph{Physical review letters}, vol. 122, no.~14, p. 140504,
  2019. [Online]. Available:
  \url{https://doi.org/10.1103/PhysRevLett.122.140504}
\BIBentrySTDinterwordspacing

\bibitem{WKS14}
\BIBentryALTinterwordspacing
N.~Wiebe, A.~Kapoor, and K.~M. Svore, ``Quantum algorithms for nearest-neighbor
  methods for supervised and unsupervised learning,'' \emph{Quantum Information
  \& Computation}, vol.~15, no. 3-4, pp. 316--356, 2015. [Online]. Available:
  \url{https://dx.doi.org/10.26421/QIC15.3-4-7}
\BIBentrySTDinterwordspacing

\bibitem{Z99}
\BIBentryALTinterwordspacing
C.~Zalka, ``Grover's quantum searching algorithm is optimal,'' \emph{Physical
  Review A}, vol.~60, no.~4, p. 2746, 1999. [Online]. Available:
  \url{https://dx.doi.org/10.1103/PhysRevA.60.2746}
\BIBentrySTDinterwordspacing

\end{thebibliography}

\appendix

\section{Appendix: Regularity conditions for Bernstein Von-Mises Theorem} \label{BvM} 
We enumerate the regularity conditions for the Bernstein Von Mises theorem (Section 4, [HM75]) for power law schedules. Let us consider a schedule where $k$ oracle calls in series are made $N_{k}$ times followed by measurements in the standard basis and Bayesian updates. The random variable $N_{k_0}$ and $N_{k_1}$ represent the number of times outcomes $0$ and $1$ are observed out of the $N_{k}$ measurements. The probability density function and the log likelihood are given by, 
\al{ 
f(X, \theta) &= \frac{1}{Z} \prod_{k} \cos^{2}((2k+1)\theta)^{N_{k_0}} \sin^{2} ((2k+1) \theta)^{N_{k_1}} \\
l(X, \theta) &= \sum_{k} 2N_{k_0} \log  \cos((2k+1)\theta) + 2N_{k_1} \log  \sin ((2k+1) \theta)  + \log Z 
} 
The regularity conditions state that there is a suitable domain $\Theta$ such that the following statements hold for all possible values of the measurement outcomes $X$ and for some integer $s\geq 2$. 
\begin{enumerate} 
\item $f(X, \theta)$ is continuous on $\Theta$. 
\item $l(X, \theta)$ is continuous on $\overline{\Theta}$. 
\item  For every $\theta \in \Theta$ there is an open neighborhood $U_{\theta}$ such that for $\sigma, \tau \in U$ we have $E_{\sigma} (l(X, \tau)^s)$ is bounded. (more precisely, 
 the supremum of $E_{\sigma} (l(X, \tau)^s)$ is finite). 
\item For every $(\theta, \tau) \in (\Theta, \overline{\Theta})$ there exist neighborhoods $U$ and $V$ of $\theta, \tau$ (these neighborhoods depend on $\theta$ and $\tau$) 
such that  the supremum  $E_{\sigma} | \inf_{\delta \in V} l(X, \delta)|^{s}$ over all $\sigma \in U$ is finite. 
 
\item $l(X, \theta)$ is twice differentiable on $\Theta$. 
\item For all $\theta \in \Theta$ there is a neighborhood $U_{\theta}$ such that  for all $\tau \in U_{\theta}$, 
\al{ 
0 < E_{\tau} [ l''(X, \tau)^{s} ]  < \infty 
 } 
This is the $s$-th moment of the Fisher information. 
 \item There are neighborhoods $U$ for all $\theta$ and a bounded function $k_{\theta}: X \to \R$ such that, 
 \al{ 
 \norm{ l''(X, \tau) - l''(X, \sigma) }  \leq \norm{ \tau - \sigma} k_{\theta} (X) 
 } 
 for all $\tau, \sigma \in U$. 
 \item The prior probability $\lambda$ is positive on $\Theta$ and $0$ on $\R \setminus \Theta$. 
 \item For every $\theta \in \Theta$ there is a neighborhood $U_{\theta}$ such that for all $\sigma, \tau \in U_{\theta}$ and constant $c_{\theta}>0$ such that, 
 \al{ 
 | \log \lambda(\sigma) - \log \lambda(\tau) | \leq \norm{ \sigma - \tau} c_{\theta}
 } 
 This is stated as being equivalent to the continuity of $\lambda'$ on $\Theta$. 

\end{enumerate} 
These regularity conditions can be sub-divided into three groups as follows: 
\begin{enumerate} 
\item Conditions 1-4 are about the smoothness of $f(X, \theta)$ and $l(X, \theta)$, they will be satisfied if the 
the norm of log-likelihood is bounded on $\Theta$. We can choose $\Theta$ to be a subinterval around the true value for which the log-likelihood is bounded.


\item Conditions 5-7 are about the smoothness of the Fisher information on $\Theta$. They assert that the Fisher information is bounded on $\Theta$ and is differentiable, this 
means that the log-likelihood function should have derivatives of order at least $3$. 

\item Conditions 8-9 are about the smoothness of the prior distribution, namely that the first derivative of the prior should be a continuous function. 
These are trivially true for the uniform distribution. 
\end{enumerate} 

Figure 1 in Section 2 illustrates that the log-likelihood function is smooth over a neighborhood of the true value indicating that the regularity conditions for the Bernstein-Von Mises theorem are plausible in this setting, for a large neighborhood $\Theta$ to be  around the true value. Algorithm \ref{AE:pls} is stated with $\Theta$ as the entire $[0, \pi/2]$ interval as this choice seems to work in practice, one can also imagine a slightly modified algorithm where the first few sampling rounds are used to get a rough estimate for the true value lying in a large interval $\Theta$ and for subsequent rounds the prior is uniform on $\Theta$, with convergence established using the Bernstein Von-Mises theorem. Adding noise may further regularize the log-likelihood functions and enforce the regularity conditions required for the Bernstein-Von Mises theorem. 


\end{document}